\newcommand{\nc}{\newcommand}
\newcommand{\ol}{\overline}
\newcommand{\es}{\emptyset}
\newcommand{\sm}{\setminus}
\newcommand{\ve}{\varepsilon}
\newcommand{\vp}{\varphi}
\newcommand{\bc}{\bigcup}
\newcommand{\ra}{\rightarrow}
\newcommand{\sse}{\subseteq}
\newcommand{\mr}{\mathrm}
\newcommand{\mc}{\mathcal}
\newcommand{\mf}{\mathfrak}
\newcommand{\DMO}{\DeclareMathOperator}
\newcommand{\DST}{\displaystyle}
\newcommand{\ZZ}{\mathbb{Z}}
\newcommand{\NN}{\mathbb{N}}
\newcommand{\NNZ}{\NN_0}
\newcommand{\RR}{\mathbb{R}}
\newcommand{\PP}{\mathbb{P}}
\newcommand{\Ende}{\ \rule{0.4em}{1.7ex}}
\newcommand{\inl}[1]{\lstinline$#1$}
\newcommand{\set}[1]{\{ #1 \}}
\newcommand{\setb}[1]{\big \{ \, #1 \, \big \}}
\DMO{\dom}{dom}
\DMO{\id}{id}
\DMO{\cod}{cod} 
\DMO{\rg}{rg} 
\DMO{\tcomp}{\trans{\circ}} 
\DMO{\simrv}{\,\sim\hspace{-0.05em}}
\DMO{\simlv}{\!\sim\,}
\nc{\simlvi}[1]{\!\sim_{#1}}
\DMO{\card}{card}
\DMO{\proj}{pr}
\DMO{\inj}{in}
\DMO{\symdif}{\vartriangle} 
\DMO{\addcup}{{\stackrel{\text{\raisebox{-2.2ex}[-0ex][-0ex]{\large$\cdot$}}}{\cup}}} 
\DMO{\addbcup}{{\stackrel{\text{\raisebox{-4.2ex}[-0ex][-0ex]{\Large$\cdot$}}}{\bigcup}}} 
\nc{\apprel}[3]{{#1}(#2)_{(#3)}} 
\DMO{\Rel}{\mf{REL}} 
\DMO{\Abb}{\mf{MAP}} 
\DMO{\Tra}{\mf{T}} 
\DMO{\Per}{\mf{S}} 
\DMO{\Pert}{\Per_t} 
\DMO{\Ptr}{\mf{PT}} 
\DMO{\fix}{fix} 
\DMO{\Peri}{\Per_i} 
\DMO{\Pers}{\Per_s} 
\DMO{\Rrel}{\Rel_r} 
\DMO{\Srel}{\Rel_s} 
\DMO{\Trel}{\Rel_t} 
\DMO{\konkat}{\sqcup} 
\DMO{\cmpl}{\complement^1} 
\nc{\cmpli}[1]{\complement^1_{#1}} 
\DMO{\cmplz}{\complement^0} 
\nc{\cmplzi}[1]{\complement^0_{#1}} 
\DMO{\cmplzo}{\complement^*} 
\nc{\cmplzoi}[1]{\complement^*_{#1}} 
\DMO{\fsigma}{{\mf{F}}_{\sigma}} 
\DMO{\gdeltao}{\mf{G}_{\sigma}}
\DMO{\fs}{{\mf{F}}_{s}} 
\DMO{\fss}{{\mf{F}}_{s}^*} 
\nc{\zf}{\mr{ZF}}
\nc{\zfmf}{\zf^0} 
\nc{\zfc}{\mr{ZFC}}
\nc{\zfcmf}{\zfc^0} 
\nc{\bst}{\mr{BST}} 
\newcommand{\tb}[2]{\set{#1, \dots, #2}} 
\DMO{\re}{Re}
\DMO{\im}{Im}
\DMO{\sgn}{sgn} 
\providecommand{\abs}[1]{\lvert #1 \rvert} 
\DMO{\ld}{ld} 
\nc{\untit}[2]{{#1}^{#2 \downarrow}} 
\nc{\obit}[2]{{#1}^{#2 \uparrow}} 
\DMO{\inttop}{\tau_{\mr{O}}} 
\DMO{\rointtop}{\tau_+} 
\DMO{\lointtop}{\tau_-} 
\DMO{\sid}{\mc{IDL}} 
\DMO{\skid}{\mc{CID}} 
\DMO{\smid}{\sid_m} 
\DMO{\smkid}{\skid_m} 
\DMO{\nachbarn}{\Gamma}
\DMO{\enachbarn}{N}
\DMO{\nachbarnr}{\Gamma_{\!\mr{r}}}
\DMO{\nachbarnz}{\widetilde{\Gamma}}
\DMO{\nachbarnzr}{\widetilde{\Gamma}_{\!\mr{r}}}
\DMO{\inzEK}{\mc{I}^{\mr{V}}}
\DMO{\inzEKe}{\mc{I}^{\mr{V}}_1}
\DMO{\inzEKz}{\mc{I}^{\mr{V}}_2}
\nc{\inzEKi}[1]{\mc{I}^{\mr{V}}_{#1}}
\DMO{\inzKE}{\mc{I}^{\mr{E}}}
\DMO{\inzKEe}{\mc{I}^{\mr{E}}_1}
\DMO{\inzKEz}{\mc{I}^{\mr{E}}_2}
\nc{\inzKEi}[1]{\mc{I}^{\mr{E}}_{#1}}
\DMO{\inz}{I}
\DMO{\tinz}{\trans{\inz}} 
\DMO{\adjE}{\mc{A}^{\mr{V}}}
\DMO{\adjEe}{\mc{A}^{\mr{V}}_1}
\DMO{\adjEz}{\mc{A}^{\mr{V}}_2}
\nc{\adjEi}[1]{\mc{A}^{\mr{V}}_{#1}}
\DMO{\adjor}{\mc{A}_{\mr{S}}} 
\DMO{\adjK}{\mc{A}^{\mr{E}}}
\DMO{\adj}{A}
\DMO{\degmin}{\overleftarrow{\deg}}
\DMO{\degmax}{\overrightarrow{\deg}}
\DMO{\degdur}{\widetilde{\deg}} 
\DMO{\ideg}{idg} 
\DMO{\odeg}{odg} 
\DMO{\degmaxl}{\overrightarrow{\deg}_{<}}
\DMO{\degl}{\deg_{<}}
\DMO{\rankmin}{\overleftarrow{\rank}}
\DMO{\rankmax}{\overrightarrow{\rank}}
\DMO{\rankdur}{\widetilde{\rank}}
\DMO{\rankmaxl}{\overrightarrow{\rank}_{<}}
\DMO{\rankl}{\rank_{<}}
\DMO{\vertexcon}{\kappa} 
\DMO{\edgecon}{\lambda} 
\DMO{\treewidth}{tw} 
\DMO{\girth}{g} 
\DMO{\circumference}{cf} 
\DMO{\length}{lgth} 
\DMO{\npm}{\Phi} 
\DMO{\concomp}{cc} 
\DMO{\nconcomp}{ncc} 
\DMO{\indprim}{ip} 
\DMO{\indimprim}{iip} 
\DMO{\bouquet}{B}
\DMO{\dipol}{D}
\DMO{\jkg}{J} 
\DMO{\vjkg}{VK} 
\DMO{\Tr}{Tr} 
\DMO{\Ind}{Ind} 
\DMO{\Zuo}{Mat} 
\DMO{\Pzuo}{PMat} 
\DMO{\St}{St} 
\DMO{\Ints}{Ints} 
\DMO{\Cov}{Cov} 
\DMO{\closse}{clo_{\sse}} 
\DMO{\clospe}{clo_{\supseteq}} 
\DMO{\edgemg}{ML} 
\DMO{\kneserg}{K} 
\DMO{\knesern}{\tau_0} 
\DMO{\nis}{nis} 
\DMO{\PBD}{PBD}
\nc{\BD}[1]{{#1}\text{-}\mr{BD}}
\DMO{\BIBD}{BIBD}
\DMO{\Steiner}{S}
\DMO{\SteinerTriple}{STS}
\DMO{\SteinerQuadruple}{SQS}
\DMO{\progeo}{PG} 
\DMO{\affgeo}{AG} 
\DMO{\astriv}{A_t} 
\DMO{\KochenSpecker}{KS}
\DMO{\KochenSpeckerErw}{KS'}
\DMO{\rankd}{rd}
\DMO{\mnconcomp}{mncc}
\DMO{\gpk}{\Box} 
\DMO{\gpw}{\times} 
\DMO{\gps}{\boxtimes} 
\DMO{\gjoin}{\boxdot} 
\DMO{\gjoinplus}{\boxplus} 
\DMO{\Ketten}{\mc{L}}
\DMO{\Antiketten}{\mc{A}}
\DMO{\comparable}{\Bumpeq}
\DMO{\incomparable}{\parallel}
\DMO{\pot}{\PP} 
\DMO{\pote}{\PP_f} 
\DMO{\potfv}{\overrightarrow{\PP}} 
\DMO{\potfvn}{\overrightarrow{\PP}^{\!*}} 
\DMO{\potfr}{\overleftarrow{\PP}} 
\DMO{\fak}{fac}
\newcommand{\floor}[1]{\lfloor #1 \rfloor}
\DMO{\partitiont}{p}
\DMO{\teilt}{\mid} 
\DMO{\nteilt}{\nmid} 
\nc{\Prim}{\mc{PR}} 
\DMO{\ord}{ord}
\DMO{\ggt}{ggt}
\DMO{\kgv}{kgv}
\DMO{\opmod}{mod}
\DMO{\opdiv}{div}
\DMO{\eulphi}{\vp}
\DMO{\Li}{Li}
\DMO{\Ei}{Ei}
\newcommand{\Cls}{\mc{CLS}}
\newcommand{\Sat}{\mc{SAT}}
\newcommand{\Musat}{\mc{M\hspace{0.8pt}U}} 
\newcommand{\Musati}[1]{\Musat_{\!#1}} 
\newcommand{\Smusat}{\mc{S}\Musat} 
\newcommand{\Smusati}[1]{\Smusat_{\!#1}}
\nc{\Clsoo}{\Cls^{1,1}} 
\DMO{\munpuclash}{\mu{}NH}
\DMO{\hdef}{\delta_{\mr{h}}} 
\DMO{\rdef}{\delta_{\mr{r}}} 
\newcommand{\Lean}{\mc{LEAN}}
\newcommand{\Mlean}{\mc{M}\Lean}
\newcommand{\Mleani}[1]{\Mlean_{\!#1}}
\DMO{\nulli}{null} 
\DMO{\lit}{lit}
\DMO{\var}{var}
\DMO{\val}{val}
\DMO{\res}{\diamond} 
\DMO{\resop}{Res} 
\DMO{\mresop}{mRes} 
\DMO{\dpl}{DP} 
\DMO{\comp}{Comp} 
\DMO{\compex}{\comp_{ER}} 
\DMO{\compr}{\comp_R} 
\DMO{\comptr}{\comp_{tR}} 
\newcommand{\Us}{\mc{U}} 
\DMO{\comptru}{\comp_{tR(\Us)}} 
\DMO{\compru}{\comp_{R(\Us)}}
\DMO{\hardness}{hd}
\DMO{\pebf}{PF} 
\DMO{\rt}{rt} 
\DMO{\nds}{nds} 
\DMO{\lvs}{lvs} 
\DMO{\nlvs}{\#lvs} 
\DMO{\nnds}{\#nds} 
\DMO{\height}{ht}
\DMO{\depth}{d}
\DMO{\cls}{cls}
\DMO{\newcommandls}{\#cls}
\DMO{\ds}{ds}
\DMO{\dst}{ds_T}
\DMO{\dsg}{ds_G}
\DMO{\dpr}{dp}
\DMO{\dprt}{dp_T}
\DMO{\dprg}{dp_G}
\DMO{\ind}{in}
\DMO{\indg}{in_G}
\DMO{\outd}{out}
\DMO{\outdg}{out_G}
\DMO{\peb}{peb} 
\newcommand{\pao}[2]{\langle #1 \ra #2 \rangle}
\DMO{\taum}{\max \tau}
\DMO{\tauprob}{\tau^p} 
\DMO{\mtau}{\mf{T}} 
\DMO{\concatbt}{;} 
\DMO{\compobt}{\merge} 
\nc{\bth}[1]{\langle{#1}\rangle} 
\DMO{\pc}{pc}
\DMO{\aut}{Auk} 
\DMO{\laut}{LAuk} 
\DMO{\lautz}{LAuk_0} 
\DMO{\maut}{MAuk}
\newcommand{\A}{\mc{A}} 
\DMO{\nv}{N} 
\DMO{\na}{\nv_a} 
\DMO{\nA}{\nv_{\A}} 
\DMO{\nla}{\nv_{la}}
\DMO{\nbla}{\nv_{bla}}
\DMO{\nma}{\nv_{ma}}
\DMO{\npa}{\nv_{pa}}
\DMO{\baut}{BAuk} 
\DMO{\blaut}{BLAuk} 
\DMO{\blautz}{BLAuk_0} 
\DMO{\paut}{PAut} 
\DMO{\pautz}{PAut_0} 
\DMO{\resouz}{\overset{\Us, 0}{\vdash}}
\DMO{\resouo}{\overset{\Us, 1}{\vdash}}
\DMO{\resouk}{\overset{\Us,\, k}{\vdash}}
\DMO{\resou}{\,\overset{\Us}{\vdash}\,}
\DMO{\resour}{\,\overset{\Us_0}{\vdash}\,}
\DMO{\resourz}{\,\overset{\Us_0, 0}{\vdash}\,}
\DMO{\uresouk}{\resouk\hspace{-0.6em}\mbox{\raisebox{0.8ex}{\tiny u}}}
\DMO{\bresouk}{\resouk\hspace{-0.6em}\mbox{\raisebox{0.8ex}{\tiny b}}}
\DMO{\iresouk}{\resouk\hspace{-0.6em}\mbox{\raisebox{0.8ex}{\tiny i}}}
\DMO{\resok}{\overset{k}{\vdash}} 
\DMO{\wid}{wid} 
\DMO{\widl}{\hspace*{-1.5pt}wid}
\DMO{\widb}{\sideset{^{\mr{b}}}{}\widl}
\DMO{\widi}{\sideset{^{\mr{i}}}{}\widl}
\DMO{\cwid}{\mc{W}} 
\DMO{\cwidl}{\hspace*{-1pt}\mc{W}} 
\DMO{\cwidb}{\sideset{^{\mr{b}}}{}\cwidl}
\DMO{\cwidi}{\sideset{^{\mr{i}}}{}\cwidl}
\DMO{\modp}{mod_p} 
\DMO{\modt}{mod_t} 
\DMO{\moda}{\mf{S}} 
\DMO{\modf}{fal} 
\DMO{\mods}{mod} 
\DMO{\mus}{MU}
\DMO{\mss}{MS}
\DMO{\cmus}{CMU}
\DMO{\cmss}{CMS}
\DMO{\eqs}{EQ} 
\DMO{\neqs}{NEQ} 
\DMO{\scf}{CM} 
\DMO{\acf}{DCM} 
\DMO{\cmg}{cmg} 
\DMO{\cmdg}{cmdg} 
\DMO{\cg}{cg} 
\DMO{\gcg}{cgg} 
\DMO{\gcdg}{cgdg} 
\DMO{\rsg}{rg} 
\DMO{\srsg}{srg} 
\DMO{\vhg}{vhg} 
\DMO{\cvg}{cvg} 
\DMO{\cvmg}{cvmg} 
\DMO{\vig}{vig} 
\DMO{\vcg}{vcg} 
\DMO{\nscf}{bcp} 
\DMO{\nacf}{bcp_d} 
\DMO{\bcp}{bcp} 
\DMO{\tbcp}{tbcp} 
\DMO{\nsat}{\#sat}
\DMO{\nusat}{\#usat}
\DMO{\maxsat}{maxsat}
\DMO{\pmin}{\rankmin}
\DMO{\pmax}{\rankmax}
\DMO{\pav}{\rankdur}
\DMO{\ldeg}{ld} 
\DMO{\minldeg}{\mu\!\ldeg} 
\DMO{\maxldeg}{\nu\ldeg} 
\DMO{\vdeg}{vd} 
\DMO{\minvdeg}{\mu\!\vdeg} 
\DMO{\maxvdeg}{\nu\!\vdeg} 
\DMO{\avvdeg}{\widetilde{\vdeg}} 
\DMO{\cldeg}{cldg} 
\DMO{\mvardu}{\mu\!\vdeg}
\DMO{\Inj}{Inj}
\DMO{\Ex}{Ex} 
\DMO{\surp}{\sigma} 
\DMO{\nonmer}{nM} 
\newtheorem{defi}{Definition}[section]
\newtheorem{lem}[defi]{Lemma}
\newtheorem{thm}[defi]{Theorem}
\newtheorem{corol}[defi]{Corollary}
\newtheorem{conj}[defi]{Conjecture}
\nc{\bm}{\boldmath}
\nc{\bmm}[1]{\mbox{\bm$\DST #1$}}
\nc{\mi}[1]{\bmm{\mathrm{(#1):}} \quad}
\DeclareMathOperator{\fld}{fld} 
\newcommand{\Mlcr}{\mc{MLCR}} 
\newenvironment{proof}{\noindent\textbf{Proof} \hspace*{0.2em}}{\Ende}
\begin{document}

\title{On variables with few occurrences in\\ conjunctive normal forms}

\author{
  Oliver Kullmann\\
  Computer Science Department\\
  Swansea University\\
  Swansea, SA2 8PP, UK\\
  O.Kullmann@Swansea.ac.uk\\
  {\small \url{http://cs.swan.ac.uk/~csoliver}}
  \and
  Xishun Zhao\thanks{Supported by NSFC Grant 60970040.}\\
  Institute of Logic and Cognition\\
  Sun Yat-sen University\\
  Guangzhou, 510275, P.R.C.
}

\maketitle

\begin{abstract}
  We consider the question of the existence of variables with few occurrences in boolean conjunctive normal forms (clause-sets). Let $\minvdeg(F)$ for a clause-set $F$ denote the minimal variable-degree, the minimum of the number of occurrences of variables. Our main result is an upper bound $\minvdeg(F) \le \nonmer(\surp(F)) \le \surp(F) + 1 + \log_2(\surp(F))$ for \emph{lean clause-sets} $F$ in dependency on the \emph{surplus} $\surp(F)$. Lean clause-sets, defined as having no non-trivial autarkies, generalise minimally unsatisfiable clause-sets. For the surplus we have $\surp(F) \le \delta(F) = c(F) - n(F)$, using the deficiency $\delta(F)$ of clause-sets, the difference between the number of clauses and the number of variables. $\nonmer(k)$ is the $k$-th ``non-Mersenne'' number, skipping in the sequence of natural numbers all numbers of the form $2^n - 1$. As an application of the upper bound we obtain that clause-sets $F$ violating $\minvdeg(F) \le \nonmer(\surp(F))$ must have a non-trivial autarky (so clauses can be removed satisfiability-equivalently by an assignment satisfying some clauses and not touching the other clauses). It is open whether such an autarky can be found in polynomial time.
\end{abstract}


\section{Introduction}
\label{sec:intro}

We study the existence of ``simple'' variables in boolean conjunctive normal forms, considered as clause-sets. ``Simple'' here means a variable occurring not very often. A major use of the existence of such variables is in inductive proofs of properties of minimally unsatisfiable clause-sets, using splitting on a variable to reduce $n$, the number of variables, to $n-1$: here it is vital that we have control over the changes imposed by the substitution, and so we want to split on a variable occurring as few times as possible. The background for these considerations is the enterprise of classifying minimal unsatisfiable clause-sets $F$ in dependency on the deficiency $\delta(F) := c(F) - n(F)$, the difference between the number $c(F) := \abs{F}$ of clauses of $F$ and the number $n(F) := \abs{\var(F)}$ of variables of $F$. The most basic fact is $\delta(F) \ge 1$, as first shown in \cite{AhLi86}. For $\delta(F) = 1$ the structure is completely known (\cite{AhLi86,DDK98,Ku99dKo}, for $\delta(F) = 2$ the structure after reduction of singular variables (occurring in one sign only once) is known (\cite{KleineBuening2000SubclassesMU}), while for $\delta(F) \in \set{3,4}$ only basic cases have been classified (\cite{XD99}).

The starting point of our investigation is Lemma C.2 in \cite{Ku99dKo}, where it is shown that a minimally unsatisfiable clause-set $F$ must have a variable $v$ with at most $\delta(F)$ positive and at most $\delta(F)$ negative occurrences; we write this as $\ldeg_F(v) \le \delta(F)$ and $\ldeg_F(\ol{v}) \le \delta(F)$, using the notion of \emph{literal degrees} (the number of occurrences of the literal). Thus we have $\vdeg_F(v) \le 2 \delta(F)$, using the \emph{variable degree} $\vdeg_F(v) := \ldeg_F(v) + \ldeg_F(\ol{v})$. Using the \emph{minimum variable degree} (min-var-degree) $\minvdeg(F) := \min_{v \in \var(F)} \vdeg_F(v)$ of $F$, this becomes $\minvdeg(F) \le 2 \delta(F)$. In this article we show a sharper bound on $\minvdeg(F)$ for a larger class of clause-sets $F$. More precisely, we show that the worst-cases $\ldeg_F(v), \ldeg_F(\ol{v}) \le \delta(F)$ can not occur at the same time (for a suitable variable), but actually $\ldeg_F(v) + \ldeg_F(\ol{v}) - \delta(F)$ only grows logarithmically in $\delta(F)$, and this for a larger class of formulas.

The larger class of clause-sets considered is the class $\Lean$ of \emph{lean clause-sets}, which are clause-sets having no non-trivial autarky. For an overview on the theory of minimally unsatisfiable clause-sets and on the theory of autarkies see \cite{Kullmann2007HandbuchMU}. The deficiency $\delta(F) \in \ZZ$ of clause-sets is replaced by the \emph{surplus} $\surp(F) \in \ZZ$, which is the minimal deficiency over all clause-sets $F[V]$ for non-empty variable sets $V \sse \var(F)$, where $F[V]$ is obtained from $F$ by removing clauses which have no variables in $V$, and restricting the remaining clauses to $V$; see \cite{Kullmann2007ClausalFormZI} for more information on the surplus of (generalised) clause-sets. We need to count multiple occurrences of clauses here (which might arise during the process of removing literals with variables not in $V$), and thus actually multi-clause-sets $F$ are used here. Note that by considering $V = \var(F)$ we have $\surp(F) \le \delta(F)$, and by considering $V = \set{v}$ for $v \in \var(F)$ we get $\surp(F) \le \minvdeg(F) - 1$. Now the main result of this article (Theorem \ref{thm:leanminvardeg}) is
\begin{displaymath}
  \minvdeg(F) \le \nonmer(\surp(F))
\end{displaymath}
for lean $F$, where $\nonmer: \NN \ra \NN$ (see Definition \ref{def:minvdegdef}) is a super-linear function with $\nonmer(k) \le k + 1 + \log_2(k)$. As an application we obtain (Corollary \ref{cor:leanminvardegsat}), that if a (multi-)clause-set $F$ has no variable occurring with degree at most $\delta(F) + 1 + \log_2(\delta(F))$, then $F$ has a non-trivial autarky. It is an open problem whether such an autarky can be found in polynomial time (for arbitrary clause-sets $F$); we conjecture (Conjecture \ref{con:findauthard}) that this is possible.

\paragraph{Related work}

This article appears to be the first systematic study of the problem of minimum variable occurrences in minimally unsatisfiable clause-sets and generalisations, in dependency on the deficiency, asking for the existence of a variable occurring ``infrequently'' in general, or for extremal examples where all variables occur not infrequently. The problem of maximum variable occurrences (asking for the existence of a variable occurring frequently in general, or for extremal examples where all variables occur not frequently) in uniform (minimally) unsatisfiable clause-sets, in dependency on the (constant) clause-length, has been studied in the literature, starting with \cite{Tovey1984NPcomplete}; for a recent article see \cite{GebauerSzaboTardos2010LocalLemma}.

\paragraph{Overview}

In Section \ref{sec:prelim} basic notions and concepts regarding clause-sets, autarkies and minimal unsatisfiability are reviewed. Section \ref{sec:nonmer} introduces the numbers $\nonmer(k)$ and proves exact formulas and sharp lower and upper bounds. Section \ref{sec:leansurp} contains the main results. First in Subsection \ref{sec:specialcaseMU} the bound is shown for minimally unsatisfiable clause-sets (Theorem \ref{thm:MUminvdegdef}). In Subsection \ref{sec:proofgencase} the bound then is lifted to lean clause-sets, proving Theorem \ref{thm:leanminvardeg}. The immediate corollary of Theorem \ref{thm:leanminvardeg} is, that if the asserted upper bound on the minimal variable degree is not fulfilled, then a non-trivial autarky must exist (Corollary \ref{cor:leanminvardegsat}). In Subsection \ref{sec:findaut} the problem of finding such autarky is discussed, with Conjecture \ref{con:findauthard} making precise our believe that one can find such autarkies efficiently. In Section \ref{sec:strengthbound} we discuss the sharpness of the bound, and the possibilities to generalise it further. Finally, in Section \ref{sec:open} open problems are stated, culminating in the central Conjecture \ref{con:finhit} about the classification of unsatisfiable hitting clause-sets (or ``disjoint tautologies'' in the terminology of DNFs).

\section{Preliminaries}
\label{sec:prelim}

We follow the general notations and definitions as outlined in \cite{Kullmann2007HandbuchMU}, where also further background on autarkies and minimal unsatisfiability can be found. We use $\NN = \set{1,2,\dots}$ and $\NNZ = \NN \cup \set{0}$.

\subsection{Clause-sets}
\label{sec:prelimcls}

Complementation of literals $x$ is denoted by $\ol{x}$, while for a set $L$ of literals we define $\ol{L} := \set{\ol{x} : x \in L}$. A \textbf{clause} $C$ is a finite and clash-free set of literals (i.e., $C \cap \ol{C} = \es$), while a \textbf{clause-set} is a finite set of clauses. We use $\var(F) := \bc_{C \in F} \var(C)$ for the set of variables of $F$, where $\var(C) := \set{\var(x) : x \in C}$ is the set of variables of clause $C$, while $\var(x)$ is the underlying variable for a literal $x$. For a clause-set $F$ we denote by $n(F) := \abs{\var(F)} \in \NNZ$ the number of variables and by $c(F) := \abs{F} \in \NNZ$ the number of clauses. The \textbf{deficiency} of a clause-set is denoted by $\delta(F) := c(F) - n(F) \in \ZZ$. We call a clause $C$ \textbf{full} for a clause-set $F$ if $\var(C) = \var(F)$, while a clause-set $F$ is called full if every clause is full. For a finite set $V$ of variables let $A(V)$ be the set of all $2^{\abs{V}}$ full clauses over $V$. Thus full clause-sets are exactly the sub-clause-sets of some $A(V)$. A \textbf{partial assignment} is a map $\vp: V \ra \set{0,1}$ for some (possibly empty) set $V$ of variables. The application of a partial assignment $\vp$ to a clause-set $F$ is denoted by $\vp * F$, which yields the clause-set obtained from $F$ by removing all satisfied clauses (which have at least one literal set to $1$), and removing all falsified literals from the remaining clauses. A clause-set $F$ is satisfiable iff there is a partial assignment $\vp$ with $\vp * F = \top := \es$, otherwise $F$ is unsatisfiable. All $A(V)$ are unsatisfiable.

These notions are generalised to \textbf{multi-clause-sets}, which are pairs $(F, m)$, where $F$ is a clause-set and $m: F \ra \NN$ determines the multiplicity of the clauses. Now $c((F,m)) := \sum_{C \in F} m(C)$, while the application of partial assignments $\vp$ to a multi-clause-set $F$ yields a \emph{multi-}clause-set $\vp * F$, where the multiplicity of a clause $C$ in $\vp * F$ is the sum of all multiplicities of clauses in $F$ which are shortened to $C$ by $\vp$. For example if $\vp$ is a total assignment for $F$ (assigns all variables of $F$) which does not satisfying $F$ (i.e., $\vp * F \not= \top$), then $\vp * F$ is $(\set{\bot},(f)_{C \in \set{\bot}})$, where $\bot := \es$ is the empty clause, while $f \in \NN$ is the number of clauses (with their multiplicities) of $F$ falsified by $\vp$.

For the number of occurrences of a literal $x$ in a (multi-)clause-set $(F,m)$ we write $\ldeg_F(x) := \sum_{C \in F, x \in C} m(C)$, called the \textbf{literal-degree}, while the \textbf{variable-degree} of a variable $v$ is defined as $\vdeg_F(v) := \ldeg_F(v) + \ldeg_F(\ol{v})$. A \textbf{singular variable} in a (multi-)clause-set $F$ is a variable occurring in one sign only once (i.e., $1 \in \set{\ldeg_F(v), \ldeg_F(\ol{v})}$). A (multi-)clause-set is called \textbf{non-singular} if it does not have singular variables.

For a set $V$ of variables and a multi-clause-set $F$ by $F[V]$ the \textbf{restriction} of $F$ to $V$ is denoted, which is obtained by removing clauses from $F$ which have no variables in common with $V$, and removing from the remaining clauses all literals where the underlying variable is not in $V$ (note that this can increase multiplicities of clauses).

\subsection{Autarkies}
\label{sec:prelimAut}

An \textbf{autarky} for a clause-set $F$ is a partial assignment $\vp$ which satisfies every clause $C \in F$ it touches, i.e., with $\var(\vp) \cap \var(C) \not= \es$. The empty partial assignment is always an autarky for every $F$, the \textbf{trivial autarky}. If $\vp$ is an autarky for $F$, then $\vp * F \sse F$ holds, and thus $\vp * F$ is satisfiability-equivalent to $F$. A clause-set $F$ is \textbf{lean} if there is no non-trivial autarky for $F$. A weakening is the notion of a \textbf{matching-lean} clause-set $F$, which has no non-trivial \textbf{matching autarky}, which are special autarkies given by a matching condition (for every clause touched, a unique variable underlying a satisfied literal must be selectable). The process of applying autarkies as long as possible to a clause-set is confluent, yielding the \textbf{lean kernel} of a clause-set. Computation of the lean kernel is NP-hard, but the \textbf{matching-lean kernel}, obtained by applying matching autarkies as long as possible, which is also a confluent process, is computable in polynomial time. Note that a clause-set $F$ is lean resp.\ matching lean iff the lean resp.\ matching-lean kernel is $F$ itself. For every matching-lean multi-clause-set $F \not= \top$ we have $\delta(F) \ge 1$, while in general a multi-clause-set $F \not= \top$ is matching lean iff $\surp(F) \ge 1$, where the \textbf{surplus} $\sigma(F) \in \ZZ$ is defined as the minimum of $\delta(F[V])$ for all $\es \not= V \sse \var(F)$. Note that while w.r.t.\ general autarkies there is no difference between a multi-clause-set and the underlying clause-set, for matching autarkies there is a difference, due to the matching condition. For more information on autarkies see \cite{Kullmann2007HandbuchMU,Kullmann2007ClausalFormZI}.

\subsection{Minimally unsatisfiable clause-sets}
\label{sec:MUprelim}

The set of minimally unsatisfiable clause-sets is $\Musat$, the set of all clause-sets which are unsatisfiable, while removal of any clause makes them satisfiable. Furthermore the set of saturated minimally unsatisfiable clause-sets is $\Smusat \subset \Musat$, which is the set of minimally unsatisfiable clause-sets such that addition of any literal to any clause renders them satisfiable. We recall the fact that every minimally unsatisfiable clause-set $F \in \Musat$ can be \textbf{saturated}, i.e., by adding literal occurrences to $F$ we obtain $F' \in \Smusat$ with $\var(F') = \var(F)$ such that there is a bijection $\alpha: F \ra F'$ with $C \sse \alpha(C)$ for all $C \in F$. Some basic properties of $\Musat$ and $\Smusat$ w.r.t.\ the application of partial assignments are given in the following lemma.

\begin{lem}\label{lem:auxSMUSAT}
  For all clause-sets $F$ we have:
  \begin{enumerate}
  \item\label{lem:auxSMUSAT1} $F \in \Smusat$ iff for all $v \in \var(F)$ and $\ve \in \set{0,1}$ we have $\pao{v}{\ve} * F \in \Musat$.
  \item\label{lem:auxSMUSAT2} If for some variable $v$ holds $\pao{v}{0} * F \in \Smusat$ and $\pao{v}{1} * F \in \Smusat$, then $F \in \Smusat$.
  \item\label{lem:auxSMUSAT3} If for some variable $v$ holds $\pao{v}{0} * F \in \Musat$ and $\pao{v}{1} * F \in \Musat$, then $F \in \Musat$.
  \end{enumerate}
\end{lem}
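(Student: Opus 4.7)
The plan is to prove the three items in the order (3), (2), (1), building up from structural observations at the $\Musat$-level to the full $\Smusat$-characterization. The key unifying tool is a preimage-uniqueness property: whenever $\pao{v}{\ve} * F \in \Musat$, each clause $C^* \in \pao{v}{\ve} * F$ has a unique preimage $C \in F$, since the only alternative preimage would be $C^* \cup \set{\ol{v^\ve}}$, whose presence together with $C^*$ in $F$ would create a subsumption pair---and correspondingly a repeated occurrence of $C^*$ in the image---incompatible with $\Musat$.

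For (3), $F$ is unsatisfiable because any satisfying $\vp$ of $F$ restricts to a satisfying assignment of $\pao{v}{\vp(v)} * F$, contradicting its unsatisfiability. For minimality, given $C \in F$, I choose $\ve \in \set{0, 1}$ so that $\pao{v}{\ve}$ does not satisfy $C$ (any $\ve$ works if $v \notin \var(C)$, otherwise the value is forced), set $C^* := \pao{v}{\ve} * C \in \pao{v}{\ve} * F$, and extract $\vp_0$ satisfying $(\pao{v}{\ve} * F) \setminus \set{C^*}$ from MU of $\pao{v}{\ve} * F$; by preimage uniqueness, $\vp_0 \cup \pao{v}{\ve}$ then satisfies $F \setminus \set{C}$.

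For (2), part (3) immediately gives $F \in \Musat$, so it remains to verify saturation. Given $C \in F$ and a literal $y$ with $\var(y) \in \var(F)$ and $y, \ol y \notin C$, I must show $(F \setminus \set{C}) \cup \set{C \cup \set{y}}$ satisfiable. If $\var(y) \neq v$, I pick $\ve$ such that $\pao{v}{\ve}$ does not satisfy $C$; applying $\pao{v}{\ve}$ to the modified clause-set yields $\pao{v}{\ve} * F$ with the literal $y$ added to the clause $\pao{v}{\ve} * C$, which is satisfiable by the saturation of $\pao{v}{\ve} * F \in \Smusat$, and extending the satisfying assignment by $\pao{v}{\ve}$ finishes. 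If $\var(y) = v$, I pick $\ve$ with $y = v^\ve$: the partial assignment $\pao{v}{\ve}$ then automatically satisfies $C \cup \set{y}$, and a satisfying assignment of $(\pao{v}{\ve} * F) \setminus \set{\pao{v}{\ve} * C}$ (from MU of $\pao{v}{\ve} * F$) extends to give the desired assignment of $(F \setminus \set{C}) \cup \set{C \cup \set{y}}$.

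For (1), the backward direction follows directly from (2) by fixing any $v \in \var(F)$. The forward direction---$F \in \Smusat$ implies $\pao{v}{\ve} * F \in \Musat$ for all $v, \ve$---is the main obstacle and essentially uses the saturation of $F$. Unsatisfiability is routine; for minimality I argue by contradiction, assuming $(\pao{v}{\ve} * F) \setminus \set{C^*}$ is unsatisfiable for some $C^*$, and letting $C \in F$ be its unique preimage (from $F \in \Musat$) with $x := v^\ve$. If $v \notin \var(C)$, saturation yields a satisfying assignment $\vp$ of $(F \setminus \set{C}) \cup \set{C \cup \set{x}}$; unsatisfiability of $F$ forces $\vp(v) = \ve$, whereupon the restriction of $\vp$ satisfies the assumed-unsatisfiable $(\pao{v}{\ve} * F) \setminus \set{C^*}$, contradiction. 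If instead $\ol x \in C$, then $(\pao{v}{\ve} * F) \setminus \set{C^*} = \pao{v}{\ve} * (F \setminus \set{C})$ being unsatisfiable forces every satisfying assignment of $F \setminus \set{C}$ to set $\ol x$ to $1$; since $\ol x \in C$, this entails $C$, rendering $F \setminus \set{C}$ unsatisfiable and contradicting $F \in \Musat$.
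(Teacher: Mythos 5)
Your central tool, the preimage-uniqueness claim ``whenever $\pao{v}{\ve} * F \in \Musat$, each clause $C^* \in \pao{v}{\ve} * F$ has a unique preimage in $F$'', is not correct as stated, and this breaks your proofs of parts (\ref{lem:auxSMUSAT2}) and (\ref{lem:auxSMUSAT3}). You justify it by saying that $C^*$ and $C^* \cup \set{\ol{v^\ve}}$ both lying in $F$ would form a subsumption pair ``incompatible with $\Musat$''; but in parts (\ref{lem:auxSMUSAT2}) and (\ref{lem:auxSMUSAT3}) the only $\Musat$ fact at your disposal concerns the \emph{image} $\pao{v}{\ve} * F$, not $F$ itself, and a subsumption pair in $F$ is perfectly compatible with $\pao{v}{\ve} * F \in \Musat$: the two preimages simply collapse to a single clause in the image (a clause-set has no ``repeated occurrences''). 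Concretely, take $F = \set{ \set{a},\, \set{a,v},\, \set{\ol a, v},\, \set{\ol a, \ol v} }$: then $\pao{v}{0} * F = \pao{v}{1} * F = \set{\set{a}, \set{\ol a}} \in \Musat$, yet $F \sm \set{\set{a,v}}$ is still unsatisfiable, so $F \notin \Musat$. Running your part-(\ref{lem:auxSMUSAT3}) argument on $C = \set{a,v}$ forces $\ve = 0$, $C^* = \set{a}$; the assignment $\vp_0$ satisfying $\set{\set{\ol a}}$ sets $a \mapsto 0$, and $\vp_0 \cup \pao{v}{0}$ then falsifies the clause $\set{a} \in F \sm \set{C}$ --- precisely because $\set{a}$ is the second preimage of $C^*$. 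Preimage uniqueness only becomes available once $F \in \Musat$ is already known; invoking it to \emph{derive} $F \in \Musat$ is circular, and to make (\ref{lem:auxSMUSAT2}), (\ref{lem:auxSMUSAT3}) go through one needs an additional hypothesis excluding such subsumption collapses in $F$.

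There is a second gap in part (\ref{lem:auxSMUSAT1}). You claim the backward direction ``follows directly from (\ref{lem:auxSMUSAT2}) by fixing any $v$''; but (\ref{lem:auxSMUSAT2}) requires $\pao{v}{\ve} * F \in \Smusat$, whereas the hypothesis of (\ref{lem:auxSMUSAT1}) supplies only $\pao{v}{\ve} * F \in \Musat$. Upgrading $\Musat$ to $\Smusat$ for the images would require knowing that all further one-variable restrictions of $\pao{v}{\ve} * F$ are again in $\Musat$, which does not follow from the hypothesis on $F$; an actual induction or a direct argument is needed. In contrast, the forward direction of part (\ref{lem:auxSMUSAT1}) in your proposal is sound: there you correctly ground preimage uniqueness in $F \in \Smusat \sse \Musat$, and the two-case analysis ($v \notin \var(C)$ versus $\ol{v^\ve} \in C$) is complete and the contradictions are valid.
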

For more information on minimal unsatisfiability see \cite{Kullmann2007HandbuchMU,Kullmann2007ClausalFormZII}.

\section{Non-Mersenne numbers}
\label{sec:nonmer}

Splitting on variables with minimum occurrence in minimally unsatisfiable clause-sets leads by Theorem \ref{thm:MUminvdegdef} to the following recursion. The understanding of this recursion is the topic of this section. On a first reading, only Definition \ref{def:minvdegdef} and the main results, Lemma \ref{lem:solveN} and Corollary \ref{cor:upperboundnonmer}, need to be considered.

\begin{defi}\label{def:minvdegdef}
  For $k \in \NN$ let $\nonmer(k) := 2$ if $k = 1$, while else
  \begin{displaymath}
    \nonmer(k) := \max_{i \in \tb{2}{k}} \min(2 \cdot i, \nonmer(k-i+1) + i).
  \end{displaymath}
\end{defi}
Remarks:
\begin{enumerate}
\item This is sequence \url{http://oeis.org/A062289} in the ``On-Line Encyclopedia of Integer Sequences''. It can be defined as the enumeration of those natural numbers containing the string ``10'' (at consecutive positions). The sequence leaves out exactly the number of the form $2^n-1$ for $n \in \NN$, and thus the name. The sequence consists of arithmetic progressions of slope $1$ and length $2^m - 1$, $m = 1, 2, \dots$, each such progression separated by an additional step of $+1$. The recursion in Definition \ref{def:minvdegdef} is new, and so we can not use these characterisations, but must directly prove the basic properties.
\item The value of $\nonmer(k)$ for $k = (1), (2,3,4), (5,\dots,11)$, $(12,\dots,26)$ is $(2), (4,5,6),\\ (8,\dots,14)$, $(16,\dots,30)$.
\item For $k \ge 2$ we have $\nonmer(k) \ge 4$. This holds since $\nonmer(2) = 4$, while the induction step for $k \ge 3$ is $\nonmer(k) = \max_{i \in \tb 2k} \min(2 i, \nonmer(k-i+1)+i) \ge \min(4, \min(4+2,1+3)) = 4$.
\item By induction and by definition we have $k+1 \le \nonmer(k) \le 2 \cdot k$ for $k \in \NN$.
\end{enumerate}

For a sequence $a: \NN \ra \RR$ and $k \in \NN$ let $\bmm{\Delta a(k)} := a(k+1) - a(k)$ be the step in the value of the sequence from $k$ to $k+1$. The next number in the sequence of non-Mersenne numbers is obtained by adding $1$ or $2$ to the previous number:
\begin{lem}\label{lem:stepNM}
  For $k \in \NN$ holds $\Delta \nonmer(k) \in \set{1,2}$.
\end{lem}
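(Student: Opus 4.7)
The plan is to prove the two one-sided bounds $\nonmer(k+1) \geq \nonmer(k)+1$ and $\nonmer(k+1) \leq \nonmer(k)+2$ separately, directly from the recursion in Definition \ref{def:minvdegdef}. The base case $k = 1$ is dispatched by the direct computation $\nonmer(2) - \nonmer(1) = 4 - 2 = 2$, so from here on I work with $k \geq 2$.

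The structural device driving both bounds is the index shift $i \mapsto i' := i+1$, which bijects the admissible indices $i \in \tb{2}{k}$ in the max defining $\nonmer(k)$ with the indices $i' \in \tb{3}{k+1} \subseteq \tb{2}{k+1}$ in the max defining $\nonmer(k+1)$, while leaving the argument of the inner recursive call invariant: $\nonmer(k-i+1) = \nonmer((k+1)-i'+1)$. Thus the two min-expressions to compare are $\min(2i, \nonmer(k-i+1)+i)$ at $i$ versus $\min(2i+2, \nonmer(k-i+1)+i+1)$ at $i+1$. The only index in the max for $\nonmer(k+1)$ not covered by this substitution is the boundary $i' = 2$, which has to be handled separately.

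For the upper bound, the boundary $i' = 2$ contributes $\min(4, \nonmer(k)+2) \leq \nonmer(k)+2$ trivially. For $i' = i+1$ with $i \in \tb{2}{k}$ I would use the elementary inequality $\min(a+2, b+1) \leq \min(a,b)+2$, verified in two lines by a case split on whether $a \leq b$, to conclude $\min(2i+2, \nonmer(k-i+1)+i+1) \leq \min(2i, \nonmer(k-i+1)+i) + 2 \leq \nonmer(k)+2$. Taking the max over $i'$ yields $\nonmer(k+1) \leq \nonmer(k)+2$.

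For the lower bound I pick any $i^* \in \tb{2}{k}$ attaining the max for $\nonmer(k)$, which exists precisely because $k \geq 2$, and plug $i^*+1$ into the max for $\nonmer(k+1)$. A two-case analysis according to whether $\nonmer(k) = 2i^*$ or $\nonmer(k) = \nonmer(k-i^*+1)+i^*$ shows $\min(2i^*+2, \nonmer(k-i^*+1)+i^*+1) \geq \nonmer(k)+1$: in the first case the ``$2i$-argument'' rises by exactly $2$ while $\nonmer(k-i^*+1) \geq i^*$ forces the other to rise by at least $1$ above $\nonmer(k)$; in the second case the ``$\nonmer$-argument'' lands exactly at $\nonmer(k)+1$ while the ``$2i$-argument'' strictly exceeds it. The main point requiring care is the index bookkeeping, in particular the fact that the lower-bound argument cannot accommodate $k = 1$ since then no admissible $i \geq 2$ is available, which is what motivates the separate base case.
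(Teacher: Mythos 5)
Your proof is correct and takes essentially the same route as the paper: both hinge on the index shift $i \mapsto i+1$, which aligns the recursions for $\nonmer(k)$ and $\nonmer(k+1)$ while keeping the inner argument $\nonmer(k-i+1)$ fixed, and then bound termwise. The paper packages this slightly more compactly by rewriting the entire max-expression, pulling a $+1$ out front, and silently discarding the boundary term $i'=2$ (which requires the remark $\nonmer(k)\ge 4$ for $k\ge 2$); your version handles that boundary term explicitly, which is if anything more careful.
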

\begin{proof} {\sloppy
  For $k=1$ we get $\Delta \nonmer(1) = 2$. Now consider $k \ge 2$. We have $\nonmer(k+1) = \max(\min(4,\nonmer(k)+2), \max_{i \in \tb 3{k+1}} \min(2 i, \nonmer(k-i+2)+i)) =
  \max_{i \in \tb 3{k+1}} \min(2 i, \nonmer(k-i+2)+i) = \max_{i \in \tb 2k} \min(2 (i+1), \nonmer(k-(i+1)+2)+(i+1)) =
  \max_{i \in \tb 2k} \min(2 i + 2, \nonmer(k-i+1)+i+1) = 1 + \max_{i \in \tb 2k} \min(2 i + 1, \nonmer(k-i+1)+i)$.

Thus on the one hand we have $\nonmer(k+1) \ge 1 + \max_{i \in \tb 2k} \min(2 i, \nonmer(k-i+1)+i) = 1 + \nonmer(k)$, and on the other hand $\nonmer(k+1) \le 1 + \max_{i \in \tb 2k} \min(2 i + 1, \nonmer(k-i+1)+i+1) = 2 + \nonmer(k)$.
}
\end{proof}

\begin{corol}\label{cor:NMmon}
  $\nonmer: \NN \ra \NN$ is strictly increasing.
\end{corol}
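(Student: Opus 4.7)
The plan is extremely short, since the corollary is essentially a direct consequence of the preceding lemma. By Lemma \ref{lem:stepNM}, for every $k \in \NN$ we have $\Delta \nonmer(k) = \nonmer(k+1) - \nonmer(k) \in \set{1,2}$. In particular $\nonmer(k+1) - \nonmer(k) \ge 1 > 0$, so $\nonmer(k+1) > \nonmer(k)$ for all $k \in \NN$.

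From this one concludes strict monotonicity of $\nonmer$ on $\NN$ by a trivial induction: for $k < k'$ in $\NN$, a telescoping sum gives $\nonmer(k') - \nonmer(k) = \sum_{j=k}^{k'-1} \Delta \nonmer(j) \ge k' - k \ge 1$, hence $\nonmer(k) < \nonmer(k')$. The codomain statement $\nonmer(k) \in \NN$ is already covered by the remarks following Definition \ref{def:minvdegdef} (indeed $\nonmer(k) \ge k+1 \ge 2$).

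There is no real obstacle here; the only thing to be careful about is to use Lemma \ref{lem:stepNM} directly rather than re-deriving monotonicity from the recursion, since the latter would duplicate work already done.
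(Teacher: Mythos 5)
Your proof is correct and matches the paper's intent: the corollary is stated without an explicit proof precisely because it follows immediately from Lemma \ref{lem:stepNM} (each step is $+1$ or $+2$, hence strictly positive), which is exactly what you use. The telescoping sum is a clean way to make the ``trivial induction'' explicit.
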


\begin{corol}\label{cor:NMdiff}
  We have $\nonmer(a+b) \ge \nonmer(a) + b$ for $a \in \NN$ and $b \in \NNZ$, and thus $\nonmer(a - b) \le \nonmer(a) - b$ for $b \le a$.
\end{corol}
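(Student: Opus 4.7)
The plan is to derive the corollary as an immediate telescoping consequence of Lemma \ref{lem:stepNM}, which provides the crucial per-step lower bound $\Delta\nonmer(k) \ge 1$.

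First I would prove the inequality $\nonmer(a+b) \ge \nonmer(a) + b$ by induction on $b \in \NNZ$. The base case $b=0$ is trivial. For the inductive step, assuming $\nonmer(a+b) \ge \nonmer(a) + b$, I apply Lemma \ref{lem:stepNM} at the index $k = a+b$ to get $\nonmer(a+b+1) = \nonmer(a+b) + \Delta\nonmer(a+b) \ge \nonmer(a+b) + 1$, and then combine with the induction hypothesis to obtain $\nonmer(a+b+1) \ge \nonmer(a) + b + 1$.

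For the second inequality, I would simply reparametrise: given $b \le a$, set $a' := a - b \in \NN$ (or handle $a'=0$ as a trivial boundary if it arose, but since $b \le a$ and $a \in \NN$, if we need $a' \in \NN$ we apply the first part only when $a' \ge 1$; the case $b = a$ gives $\nonmer(0)$, which is not defined, so actually we should read $b \le a - 1$ here, or interpret the statement for $a - b \in \NN$). Then by the first part applied to $a'$ and $b$, we have $\nonmer(a) = \nonmer(a'+b) \ge \nonmer(a') + b = \nonmer(a-b) + b$, which rearranges to the desired bound.

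There is no real obstacle here; the entire content lies in Lemma \ref{lem:stepNM}, and this corollary is just the standard consequence that a sequence whose consecutive differences are bounded below by $1$ grows at least linearly. The only minor care needed is in the domain of the second inequality, to make sure $a - b$ lies in $\NN$ so that $\nonmer(a-b)$ is defined.
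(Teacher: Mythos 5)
Your proof is correct and matches the paper's intent: the paper states this as an unproved corollary immediately after Lemma \ref{lem:stepNM}, treating exactly the telescoping argument you give as immediate. Your observation about the domain of the second inequality (needing $a-b\in\NN$, so really $b<a$) is a fair minor note but does not affect the substance.
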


Instead of considering the maximum over $k-1$ cases $i \in \tb 2k$ to compute $\nonmer(k)$, we can now simplify the recursion to only one case $i(k) \in \tb 2k$, and for that case also consideration of the minimum is dispensable:
\begin{lem}\label{lem:simprecnm}
  For $k \in \NN$, $k \ge 2$, let $i(k) \in \NN$ be the smallest $i \in \tb 2k$ with $i \ge \nonmer(k-i+1)$ (note that $k \ge \nonmer(k-k+1) = 2$, and thus $i(k)$ is well-defined). For example we have $i(2) = 2$, $i(3) = 3$, $i(4) = 4$ and $i(5) = 4$. Then we have:
  \begin{enumerate}
  \item\label{lem:simprecnm1} $i(k) - \nonmer(k-i(k)+1) \le 2$.
  \item\label{lem:simprecnm2} $\nonmer(k) = \nonmer(k-i(k)+1)+i(k)$.
  \item\label{lem:simprecnm3} $\Delta i(k) \in \set{0,1}$.
  \end{enumerate}
\end{lem}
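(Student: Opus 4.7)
The plan is to deduce all three parts directly from the defining minimality of $i(k)$ --- namely that for any $i \in \tb{2}{k}$ with $i < i(k)$ one has $i < \nonmer(k-i+1)$ --- combined with the basic growth properties of $\nonmer$ from Lemma \ref{lem:stepNM} and Corollary \ref{cor:NMdiff}.

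First I would prove part \ref{lem:simprecnm1}. If $i(k) \ge 3$, then $i(k)-1$ still lies in $\tb{2}{k}$ and fails the defining condition, yielding $i(k)-1 < \nonmer(k-(i(k)-1)+1) = \nonmer(k-i(k)+2)$; invoking Lemma \ref{lem:stepNM}, which bounds each single step of $\nonmer$ by $2$, one pulls this back to $i(k) \le \nonmer(k-i(k)+1) + 2$. The boundary case $i(k) = 2$ is trivial since $\nonmer \ge 0$.

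Next I would prove part \ref{lem:simprecnm2} by splitting the range $i \in \tb{2}{k}$ in the defining recursion for $\nonmer(k)$ into three regimes. For $i < i(k)$ the condition $i \ge \nonmer(k-i+1)$ is violated, so the inner minimum collapses to $2i$, and part \ref{lem:simprecnm1} yields $2i \le 2(i(k)-1) \le i(k) + \nonmer(k-i(k)+1)$. For $i \ge i(k)$, the monotonicity of $\nonmer$ (Corollary \ref{cor:NMmon}) guarantees that the condition $i \ge \nonmer(k-i+1)$ still holds, so the inner minimum equals $\nonmer(k-i+1) + i$. For $i > i(k)$ one applies Corollary \ref{cor:NMdiff} in the form $\nonmer(k-i+1) \le \nonmer(k-i(k)+1) - (i - i(k))$, so that $\nonmer(k-i+1) + i \le \nonmer(k-i(k)+1) + i(k)$. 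Thus the maximum is attained at $i = i(k)$, proving the stated identity.

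Finally, for part \ref{lem:simprecnm3} I would bracket $i(k+1)$ from above and below. For the upper bound I check that $i = i(k)+1$ is a valid choice for $i(k+1)$: it lies in $\tb{2}{k+1}$, and $i(k)+1 > i(k) \ge \nonmer(k-i(k)+1) = \nonmer((k+1)-i+1)$, so it satisfies the defining condition at $k+1$. For the lower bound, any $i < i(k)$ with $i \in \tb{2}{k+1}$ automatically lies in $\tb{2}{k}$, and by minimality of $i(k)$ we have $i < \nonmer(k-i+1) \le \nonmer(k-i+2) = \nonmer((k+1)-i+1)$, so $i$ fails the condition at $k+1$; hence $i(k+1) \ge i(k)$.

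The main obstacle will be the bookkeeping in part \ref{lem:simprecnm2}: checking that each of the three regimes contributes at most $\nonmer(k-i(k)+1)+i(k)$, and in parts \ref{lem:simprecnm1} and \ref{lem:simprecnm3} handling the boundary indices (such as $i(k)=2$, or $i(k)=k$ so that $i(k)+1=k+1$) without leaving the allowed range. Beyond these small checks, everything is a direct unfolding of the definition of $i(k)$ together with the two basic growth properties of $\nonmer$ established earlier.
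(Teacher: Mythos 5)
Your proof is correct, and it follows essentially the same route as the paper's: Part~\ref{lem:simprecnm1} via the minimality of $i(k)$ together with Lemma~\ref{lem:stepNM}; Part~\ref{lem:simprecnm2} by splitting the range at $i(k)$ and using Part~\ref{lem:simprecnm1} plus the monotonic decrease of $i \mapsto \nonmer(k-i+1)+i$ (which you encode via Corollary~\ref{cor:NMdiff}); Part~\ref{lem:simprecnm3} from the defining minimality and monotonicity of $\nonmer$. Your write-up is somewhat more explicit than the paper's terse version, but the underlying ideas are the same.
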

\begin{proof} We have $i(k) = 2$ iff $k=2$, while for $k=2$ the assertions hold trivially; so assume $k \ge 3$ and $i(k) \ge 3$. Part \ref{lem:simprecnm1} follows by Lemma \ref{lem:stepNM} from the facts that the sequence $i \in \tb 2k \mapsto i$ moves up in steps of $+1$, while the sequence $i \in \tb 2k \mapsto \nonmer(k-i+1)$ moves down in steps of $-1$ or $-2$. It remains to show Part \ref{lem:simprecnm2}. By Lemma \ref{lem:stepNM} the sequence $i \in \tb 2k \mapsto \nonmer(k-i+1)+i$ is monotonically decreasing, and thus by definition we obtain $\nonmer(k) = \max(2 \cdot (i(k)-1), \nonmer(k-i(k)+1)+i(k))$. That the maximum here is actually always attained in the second component follows by Part \ref{lem:simprecnm1}. Finally Part \ref{lem:simprecnm3} follows again from Lemma \ref{lem:stepNM}.
\end{proof}

After these preparations we are able to characterise the ``jump positions'', the set $J \subset \NN$ of $k \in \NN$ with $\Delta \nonmer(k) = 2$. Thus $\Delta \nonmer(k) = 1$ iff $k \notin J$, and $J = \set{1,4,11,26,\dots}$. Note $\nonmer(k) = 1 + k + \abs{\set{k' \in J : k' < k}}$.
\begin{lem}\label{lem:characjump}
  Let $i'(k) := k-i(k)+1$ and $h(k) := \nonmer(i'(k))$ for $k \in \NN$, $k \ge 2$. Thus $\Delta i'(k) \in \set{0,1}$ and $\Delta i(k) = 1 - \Delta i'(k)$. Furthermore we have $\nonmer(k) = h(k) + i(k)$, thus $\Delta \nonmer(k) = \Delta h(k) + \Delta i(k)$, and $i(k) - h(k) \in \set{0,1,2}$. Consider $k \ge 2$.
  \begin{enumerate}
  \item\label{lem:characjump1} If $\Delta i(k) = 0$, then:
    \begin{enumerate}
    \item\label{lem:characjump1a} $\Delta i(k+1) = 1$
    \item\label{lem:characjump1b} $i(k) \not= h(k)$.
    \item\label{lem:characjump1c} $i(k+1) = h(k+1)$.
    \end{enumerate}
  \item\label{lem:characjump2} If $\Delta i(k) = 1$, then:
    \begin{enumerate}
    \item\label{lem:characjump2a} $\Delta h(k) = 0$, and so $k \notin J$
    \item\label{lem:characjump2b} $i(k) \not= h(k)+2$.
    \end{enumerate}
  \item\label{lem:characjump3} The following conditions are equivalent:
    \begin{enumerate}
    \item\label{lem:characjump3a} $k \in J$
    \item\label{lem:characjump3c} $\Delta h(k) = 2$
    \item\label{lem:characjump3d} $i(k) = h(k) + 2$
    \item\label{lem:characjump3g} $\Delta i(k-1) = 1$ and $i(k-1) = h(k-1) + 1$
    \item\label{lem:characjump3e} $\Delta i(k-2) = \Delta i(k-1) = 1$
    \item\label{lem:characjump3b} $i'(k) = i'(k-1) = i'(k-2)$ and $i'(k) \in J$.
    \end{enumerate}
  \item\label{lem:characjump4} If $k \in J$, then $i'(k) = \max({k' \in J : k' < k})$.
  \end{enumerate}
\end{lem}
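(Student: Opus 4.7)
The plan is induction on $k$, starting from three identities implied by Lemma \ref{lem:simprecnm}: first, $\Delta i'(k) = 1-\Delta i(k) \in \{0,1\}$ from the definition $i'(k)=k-i(k)+1$; second, $\Delta\nonmer(k) = \Delta h(k)+\Delta i(k)$ from Lemma \ref{lem:simprecnm}(2); and third, $\Delta h(k) = 0$ when $\Delta i'(k)=0$, while $\Delta h(k) = \Delta\nonmer(i'(k))$ when $\Delta i'(k)=1$, since $h(k+1) = \nonmer(i'(k+1))$. From these, parts 2(a), 1(b), and 2(b) are immediate: for 2(a), $\Delta i(k) = 1$ forces $\Delta i'(k) = 0$, hence $\Delta h(k) = 0$, so $\Delta\nonmer(k) = 1$; for 1(b), $\Delta i(k) = 0$ gives $i(k) = i(k+1) \geq h(k+1) = h(k) + \Delta h(k) \geq h(k) + 1$ since $\Delta\nonmer \geq 1$ (by Lemma \ref{lem:stepNM}); for 2(b), $\Delta i(k) = 1$ together with $i(k) = h(k)+2$ produces $i(k+1) - h(k+1) = 3$, contradicting Lemma \ref{lem:simprecnm}(1).

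I would next establish the equivalences 3(a)$\Leftrightarrow$3(c)$\Leftrightarrow$3(d) using only 2(a) and 2(b). For 3(a)$\Leftrightarrow$3(c): $\Delta\nonmer(k) = 2$ forces $\Delta i(k) = 0$ via 2(a), hence $\Delta h(k) = 2$; conversely $\Delta h(k) = 2$ forces $\Delta i'(k) = 1$ and so $\Delta\nonmer(k) = 2$. For 3(d)$\Rightarrow$3(c), 2(b) gives $\Delta i(k)=0$, and the minimality condition defining $i(k)$ reads $i(k)-1<\nonmer(i'(k)+1)=h(k)+\Delta h(k)$, which with $i(k) = h(k)+2$ forces $\Delta h(k) \geq 2$. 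For 3(c)$\Rightarrow$3(d), $\Delta i(k) = 0$ and $\Delta h(k) = 2$ give $i(k+1) = i(k)$ and $h(k+1) = h(k) + 2$; applying Lemma \ref{lem:simprecnm}(1) at $k+1$ then yields $i(k) \geq h(k) + 2$, tight with the upper bound. Parts 1(a) and 1(c) now follow with 3(d) in hand: for 1(a), supposing $\Delta i(k) = \Delta i(k+1) = 0$, the criterion $\Delta i(j) = 0 \Leftrightarrow i(j) \geq h(j) + \Delta\nonmer(i'(j))$ applied at $j = k, k+1$ forces $i(k) = h(k) + 2$ and both involved $\Delta\nonmer$ values equal to $1$, so by 3(d) $k \in J$ while $\Delta\nonmer(k) = \Delta h(k) + 0 = 1$ says $k \notin J$, contradiction; 1(c) splits on $i(k) \in \{h(k)+1, h(k)+2\}$ (by 1(b)) and uses 3(d) to fix $\Delta h(k)$ in each case so that $h(k+1) = i(k) = i(k+1)$.

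The main obstacle is 3(b), 3(e) and Part 4, which involve multi-step behaviour. The translation $\Delta i(j) = 1 \Leftrightarrow i'(j) = i'(j+1)$ rephrases 3(e) as ``$i'$ is constant on $\{k-2, k-1, k\}$'', and then $h$ is also constant over these indices. For 3(e)$\Rightarrow$3(a): if additionally $\Delta i(k) = 1$, then $i'$ and $h$ would be constant on $\{k-2, k-1, k, k+1\}$ while $i$ grows by $3$, violating $i-h \leq 2$ at $k+1$; hence $\Delta i(k) = 0$, and then $i(k) - h(k) \leq 2$ with $h(k) = h(k-2)$ and $i(k) = i(k-2)+2 \geq h(k-2)+2$ forces $i(k) = h(k) + 2$, so by 3(d) $k \in J$. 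For 3(a)$\Rightarrow$3(e): the contrapositive of 1(a) applied at $k-1$ gives $\Delta i(k-1) = 1$, and if $\Delta i(k-2) = 0$ then 1(c) at $k-2$ would force $i(k-1) = h(k-1) = h(k)$ and hence $i(k) = h(k)+1$, contradicting 3(d). Finally 3(a)$\Leftrightarrow$3(b) follows since $k \in J$ yields $\Delta h(k) = 2$, i.e., $i'(k) \in J$. Part 4 is then handled by strong induction on $k$: any putative $k' \in J$ with $i'(k) < k' < k$ either has $i'(k') = i'(k)$, which forces four indices to share that $i'$-value (incompatible with the three-consecutive structure dictated by 3(b) at both $k'$ and $k$), or $i'(k') < i'(k)$, where the inductive Part 4 at $k'$ gives $i'(k') = \max J \cap [1, k'-1]$, contradicting that $i'(k) \in J \cap [1, k'-1]$ lies strictly above $i'(k')$.
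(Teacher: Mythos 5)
Your proof is substantially correct and follows the same basic strategy as the paper (translating between $i$, $i'$, $h$, $\nonmer$ via Lemma \ref{lem:simprecnm} and establishing a cycle of implications in Part \ref{lem:characjump3}), but reorganised: you prove the small cycle \ref{lem:characjump3a}$\Leftrightarrow$\ref{lem:characjump3c}$\Leftrightarrow$\ref{lem:characjump3d} \emph{before} Parts \ref{lem:characjump1a}, \ref{lem:characjump1c} and then use \ref{lem:characjump3d} to deduce them, whereas the paper goes the other way and uses \ref{lem:characjump1c} inside the proof of \ref{lem:characjump3c}$\Rightarrow$\ref{lem:characjump3d}. Both orderings are consistent; I checked there is no circularity. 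Your proof of \ref{lem:characjump3c}$\Rightarrow$\ref{lem:characjump3d} via $i(k+1)\ge h(k+1)$ is a nice shortcut compared to the paper's detour through \ref{lem:characjump1c}, and your ``criterion'' $\Delta i(j)=0 \Leftrightarrow i(j)\ge h(j)+\Delta\nonmer(i'(j))$ is a useful organising principle the paper leaves implicit.

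One genuine omission: you never address condition \ref{lem:characjump3g}, so the chain of equivalences in Part \ref{lem:characjump3} is incomplete as written. The missing pieces are in fact hiding in your \ref{lem:characjump3a}$\Rightarrow$\ref{lem:characjump3e} argument (you derive $\Delta i(k-1)=1$, and $i(k-1)=h(k-1)+1$ follows from $\Delta i(k-1)=1$ together with $i(k)=h(k)+2$, $h(k)=h(k-1)$; and your ``if $\Delta i(k-2)=0$ then 1(c) forces $i(k-1)=h(k-1)$'' step is exactly \ref{lem:characjump3g}$\Rightarrow$\ref{lem:characjump3e}), but you need to state these explicitly to close the cycle through \ref{lem:characjump3g}. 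A second, non-fatal inefficiency: your Part \ref{lem:characjump4} does a two-way case split and invokes a run-length bound on $i'$ (derivable from your \ref{lem:characjump3e}$\Rightarrow$\ref{lem:characjump3a} argument, so legitimate, but extra work); the paper's route is shorter --- induction gives $i'(k')\ge i'(k)$ directly, and $k'\in J$ forces $\Delta i'(k')=1$, hence $i'(k)\ge i'(k')+1$, an immediate contradiction with no case analysis needed.
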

\begin{proof} Part \ref{lem:characjump1a} follows by definition. For Part \ref{lem:characjump1b} note $i(k+1) = i(k)$ while $h(k+1) \ge h(k) + 1$. For Part \ref{lem:characjump1c} assume $i(k+1) > h(k+1)$. Then we have $i(k) = h(k)+2$ and $h(k+1) = h(k)+1$. However then $i(k)-1 = h(k)+1 = h(k+1) = \nonmer(k - (i(k)-1) + 1)$ contradicting the definition of $i(k)$. For Part \ref{lem:characjump2a} assume $i(k) = i(k+1) = i(k+2)$. We have $i(k) \ge h(k+2) = \nonmer(k-i(k)+3)$, while $i(k) - 1 < \nonmer(k-(i(k)-1)+1) = \nonmer(k-i(k)+2)$, i.e., $i(k) \le \nonmer(k-i(k)+2)$, contradicting the strict monotonicity of $\nonmer$. Part \ref{lem:characjump2b} follows by $i(k+1) \le h(k+1) + 2$ and $i(k+1) = i(k)+1$, $h(k+1) = h(k)$. Now consider Part \ref{lem:characjump3}.

Condition \ref{lem:characjump3a} implies condition \ref{lem:characjump3c} due to $\Delta i(k) = 0$ in case of $k \in J$ by Part \ref{lem:characjump2a}. Condition \ref{lem:characjump3c} implies condition \ref{lem:characjump3d}, since $\Delta h(k) = 2$ implies $\Delta i(k) = 0$ (otherwise we had $\Delta \nonmer(k) = 3$), and so by Part \ref{lem:characjump1c} we have $i(k) = i(k+1) = h(k+1)$, while the assumption says $h(k+1) = h(k) + 2$. In turn condition \ref{lem:characjump3d} implies condition \ref{lem:characjump3a}, since by Part \ref{lem:characjump2b} we get $\Delta i(k) = 0$, and thus $\Delta \nonmer(k) = \Delta h(k)$, while in case of $\Delta h(k) \le 1$ we would have $i(k)-1 \ge \nonmer(k-(i(k)-1)+1)$ contradicting the definition of $i(k)$, due to $\nonmer(k-(i(k)-1) + 1) = \nonmer((k+1) - i(k+1) + 1) = h(k+1) \le h(k) + 1 = i(k) - 1$. So now we can freely use the equivalence of these three conditions.

Condition \ref{lem:characjump3d} implies condition \ref{lem:characjump3g}, since we have $\Delta i(k) = 0$, and thus $\Delta i(k-1) = 1$ with Part \ref{lem:characjump1a}, from which we furthermore get $i(k) = i(k-1) + 1$ and $h(k-1) = h(k)$, and so $i(k-1) = i(k) - 1 = h(k) + 1 = h(k-1) + 1$. Condition \ref{lem:characjump3g} implies condition \ref{lem:characjump3e}, since in case of $\Delta i(k-2) = 0$ we had $i(k-1) = h(k-1)$ with Part \ref{lem:characjump1c}. In turn condition \ref{lem:characjump3e} implies condition \ref{lem:characjump3d}, since $i(k) = i(k-1) + 1 = i(k-2) + 2$, while $h(k) = h(k-1) = h(k-2)$, where by definition $i(k-2) \ge h(k-2)$ holds, whence $i(k) \ge h(k) + 2$, which implies $i(k) = h(k) + 2$. So now the first five conditions have been shown to be equivalent.

Now condition \ref{lem:characjump3e} implies condition \ref{lem:characjump3b}, since it only remains to show $i'(k) \in J$, which follows with condition \ref{lem:characjump3c} (using $\Delta i(k) = 0$). In turn condition \ref{lem:characjump3b} implies immediately condition \ref{lem:characjump3e}.

Finally, we prove Part \ref{lem:characjump4} by induction on $k$ (regarding the enumeration of $J$). We have $i'(4) = 1$, and so the induction holds for $k=4$, the smallest jump position $k \ge 2$. Now assume that the assertion holds for all elements of $J \cap \tb 1{k-1}$, where $k > 4$, and we have to show the assertion for $k$. By Part \ref{lem:characjump3b} we know $i'(k) \in J$, where $2 \le i'(k) < k$. Assume there is $k' \in J$ with $i'(k) < k' < k$. Now by induction hypothesis we get $i'(k) \le i'(k') < k'$. However by Part \ref{lem:characjump1} we get $\Delta i'(k') = 1$, and thus $i'(k) > i'(k')$ (since $k > k'$).
\end{proof}

\begin{corol}\label{cor:CharacJ}
  We have $J = \set{2^{m+1} - m - 2 : m \in \NN}$.
\end{corol}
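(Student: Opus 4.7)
The plan is to enumerate $J$ in increasing order as $k_1 < k_2 < k_3 < \dots$, derive the two-term recurrence $k_{m+1} = 2 k_m + m + 1$ with base $k_1 = 1$, and then verify that the closed form $k_m = 2^{m+1} - m - 2$ satisfies both.

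For the base case, a direct computation gives $\nonmer(1) = 2$ and $\nonmer(2) = 4$, so $1 \in J$; by Lemma \ref{lem:stepNM} the sequence steps up by only $1$ or $2$, so there is no smaller jump position, and $k_1 = 1 = 2^2 - 1 - 2$.

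For the inductive step I would combine two ingredients. First, the identity $\nonmer(k) = 1 + k + \abs{\set{k' \in J : k' < k}}$ noted before Lemma \ref{lem:characjump} (a one-line induction using Lemma \ref{lem:stepNM}) specialises to $\nonmer(k_m) = k_m + m$. Second, for $k_{m+1} \ge 4$ Part \ref{lem:characjump4} of Lemma \ref{lem:characjump} gives $i'(k_{m+1}) = k_m$, while condition \ref{lem:characjump3d} of Part \ref{lem:characjump3} gives $i(k_{m+1}) = h(k_{m+1}) + 2 = \nonmer(i'(k_{m+1})) + 2 = \nonmer(k_m) + 2$. The definitional identity $i(k) + i'(k) = k + 1$ then yields
\[ k_{m+1} = k_m + \nonmer(k_m) + 1 = 2 k_m + m + 1. \]

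Finally, a one-line induction confirms the closed form: assuming $k_m = 2^{m+1} - m - 2$, one computes $k_{m+1} = 2(2^{m+1} - m - 2) + m + 1 = 2^{m+2} - (m+1) - 2$, which is the claimed value. The real obstacle is not arithmetic but bookkeeping: one has to unpack Lemma \ref{lem:characjump} correctly and deploy the right equivalent characterisation of ``$k \in J$'' at each step, in particular using Part \ref{lem:characjump4} to identify the predecessor jump position and condition \ref{lem:characjump3d} to convert this information into a value of $i(k_{m+1})$.
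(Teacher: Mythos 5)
Your proof is correct, and it takes a genuinely different route from the paper's. The paper also uses Part~\ref{lem:characjump4} of Lemma~\ref{lem:characjump} to get $i'(k_m) = k_{m-1}$, but then closes the induction by carrying a parallel invariant $i(k_m) = 2^m$ and verifying it via a telescoping sum $i(k_m) = i(k_{m-1}) + \sum_{k=k_{m-1}}^{k_m-1} \Delta i(k)$, which requires reading off the alternating pattern $0,1,0,1,\dots,0,1,1$ of $\Delta i$-values from Parts~\ref{lem:characjump1}--\ref{lem:characjump3} of Lemma~\ref{lem:characjump}. You sidestep this entirely by invoking condition~\ref{lem:characjump3d} directly ($i(k_{m+1}) = h(k_{m+1}) + 2 = \nonmer(k_m) + 2$), combining it with the trivial identity $i + i' = k+1$ and the counting identity $\nonmer(k_m) = k_m + m$ to get the clean two-term recurrence $k_{m+1} = 2k_m + m + 1$ in one line, then checking the closed form. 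Your version is algebraically lighter and makes the self-similar structure of $J$ more visible; the paper's version is somewhat more pedestrian but also produces the formula $i(k_m) = 2^m$ along the way, which you do not need. One minor point worth making explicit in either treatment: the enumeration $k_1 < k_2 < \dots$ is assumed to go on forever, i.e.\ $J$ is infinite; this follows for instance from Corollary~\ref{cor:upperboundnonmer} (if $J$ had only $r$ elements then $\nonmer(k) = k+1+r$ eventually, contradicting $\nonmer(k) \ge k + \fld(k+1)$), or can be read off from the recurrence itself once established. The paper silently shares this assumption, so this is not a gap specific to your argument.
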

\begin{proof} Let $k_m$ for $m \in \NN$ be the $m$th element of $J$; so the assertion is $k_m = 2^{m+1} - m - 2$. We have $k_1 = 4 - 1 - 2 = 1 = \min J$; in the remainder assume $m \ge 2$. We prove the assertion by induction, in parallel with $i(k_m) = 2^{m+1} - 2^m$. For $m=2$ we have $k_2 = 8 - 2 - 2 = 4 = \min J \sm \set{1}$, while $i(4)$ is the smallest $i \in \set{2,3,4}$ with $i \ge \nonmer(5-i)$, which yields $i(4) = 4 = 2^3 - 2^2$. Now we consider the induction step, from $m-1$ to $m$. The induction hypothesis yields $k_{m-1} = 2^m - m - 1$ and $i(k_{m-1}) = 2^m - 2^{m-1}$. Lemma \ref{lem:characjump}, Part \ref{lem:characjump4} yields $i'(k_m) = k_{m-1}$, from which by $i'(k_m) = k_m - i(k_m) + 1$ follows $k_m = 2^m - m - 2 + i(k_m)$. By definition we get $i(k_m) = \Delta i(k_m-1) + \dots + \Delta i(k_{m-1}) + i(k_{m-1})$. By Lemma \ref{lem:characjump}, Parts \ref{lem:characjump1} - \ref{lem:characjump3} the sequence of $\Delta$-values has the form (starting with the lowest index) $0,1,\: 0,1,\: \dots, 0,1,1$, and thus their sum has the value $\frac 12 (k_m - k_{m-1} - 1) + 1$. So we get $i(k_m) = \frac 12 (k_m - k_{m-1} - 1) + 1 + i(k_{m-1}) = \frac 12 (2^m - m - 2 + i(k_m) - 2^m + m + 1 - 1) + 1 + 2^m - 2^{m-1} = \frac 12 i(k_m) - 1 + 1 + 2^m - 2^{m-1}$, from which $i(k_m) = 2^{m+1} - 2^m$ follows. Finally $k_m = 2^m - m - 2 + 2^{m+1} - 2^m = 2^{m+1} - m - 2$.
\end{proof}

Now the closed formula for $\nonmer(k)$ can be proven (using $\ld(x) := \log_2(x)$):
\begin{lem}\label{lem:solveN}
  For $k \in \NN$ let $\fld(k) := \floor{\ld(k)}$ (``floor of logarithm dualis''). Then we have for $k \in \NN$ the equality $\nonmer(k) = k + \fld(k+1 + \fld(k+1))$.
\end{lem}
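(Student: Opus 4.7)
The plan is to combine Corollary \ref{cor:CharacJ} with the counting formula $\nonmer(k) = 1 + k + \abs{\set{k' \in J : k' < k}}$ already noted before Lemma \ref{lem:characjump}. Writing $k_m := 2^{m+1} - m - 2$ for the $m$th jump position and setting $M(k) := \abs{\set{m \in \NN : k_m < k}}$, we therefore have $\nonmer(k) = k + (M(k) + 1)$, so the claim reduces to proving
\begin{displaymath}
  M(k) + 1 = \fld(k+1+\fld(k+1)).
\end{displaymath}

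Let $\mu := M(k) + 1 \ge 1$. The case $\mu = 1$ forces $k \le k_1 = 1$, hence $k = 1$, which is verified by direct computation. For $\mu \ge 2$, strict monotonicity of $m \mapsto k_m$ gives $k_{\mu-1} < k \le k_\mu$, i.e., $2^{\mu} - \mu \le k \le 2^{\mu+1} - \mu - 2$. The next step is to show that $n := \fld(k+1) \in \set{\mu-1, \mu}$: the upper estimate is immediate from $k + 1 \le 2^{\mu+1} - \mu - 1 < 2^{\mu+1}$, and the lower estimate $k + 1 \ge 2^\mu - \mu + 1 \ge 2^{\mu-1}$ reduces to the easily verified inequality $2^{\mu-1} \ge \mu - 1$, valid for all $\mu \ge 1$. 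Finally, substituting either candidate $n \in \set{\mu-1, \mu}$ one obtains
\begin{displaymath}
  2^\mu \le k + \mu \le k + 1 + n \le k + 1 + \mu \le 2^{\mu+1} - 1,
\end{displaymath}
using $k \ge 2^\mu - \mu$ on the left and $k \le 2^{\mu+1} - \mu - 2$ on the right. Thus $\fld(k+1+n) = \mu$, which is what was required.

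The conceptual heart of the argument is the interaction between the two shifts by $+1$ in the expression $k + 1 + \fld(k+1)$: they conspire to flatten out precisely the ambiguity in $\fld(k+1)$ across each arithmetic progression between consecutive jump positions, so both possible values of $n$ land in the same interval $\iao{2^\mu}{2^{\mu+1}}$. Beyond Corollary \ref{cor:CharacJ} and elementary arithmetic estimates, no further ingredients are needed; the main obstacle is not technical but rather discovering the closed form, which is suggested by tabulating $\nonmer(k)$ against powers of two.
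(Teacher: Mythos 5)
Your proof is correct. The chain of estimates works: for $\mu \ge 2$ one has $2^\mu - \mu \le k \le 2^{\mu+1}-\mu-2$, which pins $n = \fld(k+1)$ to $\set{\mu-1,\mu}$, and the displayed inequality then traps $k+1+n$ in $\iao{2^\mu}{2^{\mu+1}}$ for either choice. This is essentially the paper's own argument: both proofs hinge on Corollary \ref{cor:CharacJ} and on locating $k$ between consecutive jump positions. The only cosmetic difference is that the paper exploits monotonicity of $g(k) := \fld(k+1+\fld(k+1))$ to reduce the work to evaluating $g$ at the two endpoints $k_m+1$ and $k_{m+1}$ of each interval, whereas you avoid invoking monotonicity by treating an arbitrary $k$ in the interval and handling both possible values of $\fld(k+1)$ simultaneously; the underlying arithmetic estimates are the same.
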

\begin{proof} Let $g(k) := \fld(k+1 + \fld(k+1))$ and $f(k) := k + g(k)$ (so $\nonmer(k) = f(k)$ is to be shown, for $k \ge 1$). We have $f(1) = 1 + \fld(2 + \fld(2)) = 1 + \fld(3) = 2 = \nonmer(1)$. We will now prove that the function $g(k)$ changes values exactly at the transitions $k \mapsto k+1$ for $k \in J$, that is, for indices $k = k_m := 2^{m+1} - m - 2$ (using Corollary \ref{cor:CharacJ}) with $m \in \NN$ we have $\Delta g(k_m) = 1$, while otherwise we have $\Delta g(k_m) = 0$, from which the assertion follows (by the definition of $J$).

We have $g(1) = 1$ and $g(2) = 2$. Now consider $m \in \NN$ and $k_m + 1 \le k \le k_{m+1}$. We show $g(k) = m+1$, which proves the claim. Note that $g(k)$ is monotonically increasing. Now $g(k) \ge g(k_m+1) = \floor{\ld(2^{m+1}-m + \floor{\ld(2^{m+1}-m)})} = \floor{\ld(2^{m+1}-m + m)} = m+1$ and $g(k) \le g(k_{m+1}) = \floor{\ld(2^{m+2}-m-2 + \floor{\ld(2^{m+2}-m-2)})} \le \floor{\ld(2^{m+2}-m-2 + m+1)} = \floor{\ld(2^{m+2}-1)} = m+1$.
\end{proof}

As a result, we obtain very precise bounds:
\begin{corol}\label{cor:upperboundnonmer}
  $k + \fld(k+1) \le \nonmer(k) \le k+1+\fld(k)$ holds for $k \in \NN$.
\end{corol}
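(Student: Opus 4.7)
The plan is to derive both inequalities directly from the closed formula
\[
\nonmer(k) \;=\; k + \fld\bigl(k+1+\fld(k+1)\bigr)
\]
established in Lemma \ref{lem:solveN}. After subtracting $k$, the corollary reduces to the sandwich
\[
\fld(k+1) \;\le\; \fld\bigl(k+1+\fld(k+1)\bigr) \;\le\; 1 + \fld(k),
\]
which is a purely elementary statement about $\fld$.

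First I would dispatch the lower bound. Since $k \ge 1$ we have $\fld(k+1) \ge 0$, hence $k+1+\fld(k+1) \ge k+1$, and monotonicity of $\fld$ yields $\fld(k+1+\fld(k+1)) \ge \fld(k+1)$. This gives $\nonmer(k) \ge k + \fld(k+1)$.

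For the upper bound I would set $m := \fld(k)$, so that $2^m \le k < 2^{m+1}$, and use the characterisation $\fld(x) \le y \Lra x < 2^{y+1}$. The goal $\fld(k+1+\fld(k+1)) \le m+1$ is thus equivalent to $k+1+\fld(k+1) < 2^{m+2}$. From $k+1 \le 2^{m+1}$ we get $\fld(k+1) \le m+1$, so
\[
k+1+\fld(k+1) \;\le\; 2^{m+1} + (m+1),
\]
and the inequality $2^{m+1} + (m+1) < 2^{m+2} = 2 \cdot 2^{m+1}$ reduces to $m+1 < 2^{m+1}$, which holds for every $m \in \NNZ$. Hence $\nonmer(k) \le k + 1 + \fld(k)$.

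There is really no obstacle here: once Lemma \ref{lem:solveN} is available, the whole argument is a two-line monotonicity/estimation exercise on $\fld$, the only mild subtlety being to handle the extremal case $k = 2^{m+1}-1$ where $\fld(k+1) = m+1$ attains its maximum, and checking that the resulting $2^{m+1}+m+1$ still fits below $2^{m+2}$.
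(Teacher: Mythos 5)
Your proof is correct and follows the same overall strategy as the paper: both derive the corollary from the closed formula in Lemma \ref{lem:solveN}, dispatch the lower bound by monotonicity, and reduce the upper bound to an elementary estimate on $\fld$. The only difference is in how the upper estimate is carried out: the paper passes from $\fld$ to the real logarithm and uses $\ld(k+1+\fld(k+1)) \le \ld(k+1+\ld(k+1)) \le \ld(2k)$, which forces a separate check of $k \le 2$ (since $1+\ld(k+1) \le k$ fails there), whereas your purely integer argument with $m := \fld(k)$, the bound $k+1+\fld(k+1) \le 2^{m+1}+(m+1)$, and $m+1 < 2^{m+1}$ works uniformly for all $k \ge 1$ without any case split. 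Your version is thus slightly cleaner, but it is not a genuinely different route.
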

\begin{proof} The lower bound follows trivially. The upper bound holds (with equality) for $k \le 2$, so assume $k \ge 3$. We have to show $g(k) = \fld(k+1+\fld(k+1)) \le 1 + \fld(k)$, which follows from $\ld(k+1+ \fld(k+1)) \le 1 + \ld(k)$. Now $\ld(k+1+ \fld(k+1)) \le \ld(k+1+ \ld(k+1)) \le \ld(k+k) = 1 + \ld(k)$.
\end{proof}

\section{Lean clause-sets and the surplus}
\label{sec:leansurp}

In this section we prove the main result of this paper, Theorem \ref{thm:leanminvardeg}. The proof consists in first handling a special case, minimally unsatisfiable clause-sets instead of lean clause-sets, in Subsection \ref{sec:specialcaseMU}, and then lifting the result to the general case in Subsection \ref{sec:proofgencase}. In Subsection \ref{sec:findaut} we consider the algorithmic implications of this result.

\begin{thm}\label{thm:leanminvardeg}
  We have $\minvdeg(F) \le \nonmer(\surp(F))$ for a lean multi-clause-set $F$ with $n(F) > 0$. More precisely, there exists a variable $v \in \var(F)$ with $\vdeg_F(v) \le \nonmer(\surp(F))$ and $\ldeg_F(v), \ldeg_F(\ol{v}) \le \surp(F)$.
\end{thm}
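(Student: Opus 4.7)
The proof proceeds in two stages, as outlined in the paper's Subsections \ref{sec:specialcaseMU} and \ref{sec:proofgencase}: first the statement is established for minimally unsatisfiable clause-sets with the deficiency in place of the surplus (Theorem \ref{thm:MUminvdegdef}), and then this MU case is lifted to lean multi-clause-sets.

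For the MU case, I would argue by induction on $\delta := \delta(F)$ for $F \in \Musat$, aiming for a variable $v$ with $\vdeg_F(v) \le \nonmer(\delta)$ and $\ldeg_F(v), \ldeg_F(\ol v) \le \delta$. By saturating $F$ into $\Smusat$ (which preserves $\var$, $n$, $c$, and $\delta$, while only increasing literal occurrences), Lemma \ref{lem:auxSMUSAT} ensures that every variable-split stays in $\Musat$. Pick $v^* \in \var(F)$ minimising $a_v := \min(\ldeg_F(v), \ldeg_F(\ol v))$ and write $a \le b$ for its two literal-degrees; Lemma C.2 of \cite{Ku99dKo} gives $a, b \le \delta$. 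Set $i^* := i(\delta)$ from Lemma \ref{lem:simprecnm}. If $b < i^*$, then $v^*$ itself suffices, since $\vdeg_F(v^*) = a + b \le 2b \le 2(i^* - 1) \le \nonmer(\delta)$, where the final inequality comes from $\nonmer(\delta) = \nonmer(\delta - i^* + 1) + i^* \ge (i^* - 2) + i^*$, using Lemma \ref{lem:simprecnm}. If $b \ge i^*$, I split $v^* \to 0$, obtaining $F_0 \in \Musat$ with $\delta(F_0) = \delta - b + 1$; the inductive hypothesis yields $v' \in \var(F_0)$ with $\vdeg_{F_0}(v') \le \nonmer(\delta - b + 1)$ and $\ldeg_{F_0}(v'), \ldeg_{F_0}(\ol{v'}) \le \delta - b + 1$. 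Lifting back to $F$ adds at most $b$ occurrences (the $b$ clauses containing $\ol{v^*}$, each contributing at most one literal on $v'$), so $\vdeg_F(v') \le \nonmer(\delta - b + 1) + b \le \nonmer(\delta)$ via the defining recursion of $\nonmer$ with $i = b \ge i^*$.

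For the lifting to lean, given lean $F$ with $s := \surp(F)$, choose $V^* \subseteq \var(F)$ realising $\delta(F[V^*]) = s$. Leanness of $F$ forces $F[V^*]$ to be unsatisfiable, since any satisfying total assignment over $V^*$ is a non-trivial autarky of $F$: every clause of $F$ touching $V^*$ has its $V^*$-restriction in $F[V^*]$ and is thereby satisfied, while clauses disjoint from $V^*$ are not touched. The same argument shows that $F[V^*]$ itself is lean. I then induct on $n(F)$: when $V^* \subsetneq \var(F)$, the inductive hypothesis applies to $F[V^*]$ (lean with $\surp = s$ and fewer variables), producing a variable $v \in V^*$ whose degrees in $F[V^*]$, and hence in $F$ (since $\ldeg_F(v) = \ldeg_{F[V^*]}(v)$ for any $v \in V^*$, and likewise for $\ol v$), satisfy the required bounds.

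The main obstacle is the base case of this induction, $V^* = \var(F)$, where $\surp(F) = \delta(F)$ but $F$ need not be MU, so Lemma \ref{lem:auxSMUSAT} is unavailable. One must adapt the MU-case recursion to the lean setting: splitting on $v^*$ in the appropriate sign and passing to the matching-lean kernel of the result, show that the kernel remains lean and that its surplus drops in a way that tracks the $\nonmer$-recursion, thereby permitting induction on $\surp(F)$. The second delicacy is producing the same variable $v$ that simultaneously witnesses both the $\vdeg$-bound and the two $\ldeg$-bounds, which requires keeping a careful accounting of how occurrences of the lifted variable can concentrate on one side in the removed clauses. These adaptations build on the surplus machinery developed in \cite{Kullmann2007ClausalFormZI} and form the technical heart of Subsection \ref{sec:proofgencase}.
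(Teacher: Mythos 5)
Your overall two-stage strategy (Theorem \ref{thm:MUminvdegdef} for $\Musat$, then lifting to lean multi-clause-sets via the surplus) matches the paper, but both stages have a genuine gap. In the MU case you pick $v^*$ minimising $\min(\ldeg_F(v),\ldeg_F(\ol v))$, whereas Lemma \ref{lem:auxminvardeg} picks the variable of minimum var-degree, and that choice is load-bearing. Your computation $\delta(F_0)=\delta-b+1$ tacitly assumes that only $v^*$ vanishes under the split, i.e.\ $n(\pao{v^*}{0}*F')=n(F')-1$; the paper justifies this precisely by the min-$\vdeg$ property (if some $w$ vanished, $\vdeg_{F'}(w)\le\ldeg_{F'}(\ol{v^*})\le\vdeg_{F'}(v^*)\le\vdeg_{F'}(w)$ forces $v^*$ pure). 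With your criterion this control is lost, and one cannot exclude another variable being absorbed into the $b$ removed clauses. Relatedly, ``Lemma C.2 gives $a,b\le\delta$'' only yields $a\le\delta$ for your $v^*$: Lemma C.2 provides \emph{some} variable with both literal-degrees $\le\delta$, which bounds $a_{v^*}$ by minimality but says nothing about $b$. Your $i^*$-based case analysis is a nice (if unnecessarily intricate) repackaging of the $\nonmer$-recursion and would go through once $v^*$ is chosen as the min-$\vdeg$ variable.

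The larger gap is in the lifting. Your reduction to the case $\surp(F)=\delta(F)$ via $F[V^*]$ is correct (and your verification that leanness passes to $F[V^*]$ is fine, though not needed by the paper), but you then leave the base case open, proposing to ``adapt the MU-case recursion to the lean setting''. That is not the paper's route, and absent any saturation analogue for non-MU lean clause-sets it is unclear it can work. The paper instead exploits that when $\surp(F')=\delta(F')$, \emph{every} unsatisfiable sub-multi-clause-set of $F'$ must span all of $\var(F')$ (Lemma \ref{lem:auxminvardegsigma}: otherwise a minimally unsatisfiable $F''$ with $\var(F'')\subsetneq\var(F')$ would force $\delta(F'')\le 0$). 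Picking such a minimally unsatisfiable $F''\sse F'$ therefore has $\delta(F'')=\delta(F')-(c(F')-c(F''))$, Theorem \ref{thm:MUminvdegdef} applied to $F''$ produces a variable with the desired degree bounds in $F''$, and adding back the $c(F')-c(F'')$ discarded clauses (each contributing at most one occurrence) is exactly absorbed by Corollary \ref{cor:NMdiff}, $\nonmer(\delta(F''))+(c(F')-c(F''))\le\nonmer(\delta(F'))$. This reduction to a spanning MU sub-clause-set, rather than re-running the recursion for lean $F$, is the idea your proposal is missing.
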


We obtain a sufficient criterion for the existence of a non-trivial autarky.
\begin{corol}\label{cor:leanminvardegsat}
  Consider a multi-clause-set $F$ with $n(F) > 0$. If $\surp(F) \le 0$, then $F$ has a non-trivial matching autarky. So assume $\surp(F) \ge 1$. If we have $\minvdeg(F) > \nonmer(\surp(F))$, then for every $\es \not= V \sse \var(F)$ with $\delta(F[V]) = \surp(F)$ we have an autarky $\vp$ for $F$ with $\var(\vp) = V$ (and thus $F$ has a non-trivial autarky).
\end{corol}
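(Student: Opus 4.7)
The plan is to handle the two cases separately, with the first being routine and the second requiring a reduction to Theorem~\ref{thm:leanminvardeg} applied to $F[V]$.

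For $\surp(F) \le 0$, I would simply invoke the characterisation recalled in Section~\ref{sec:prelimAut}: a non-empty multi-clause-set is matching-lean iff its surplus is at least~$1$. Since $n(F) > 0$ forces $F \ne \top$, the assumption $\surp(F) \le 0$ makes $F$ not matching-lean, yielding a non-trivial matching autarky.

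For the main case ($\surp(F) \ge 1$ and $\minvdeg(F) > \nonmer(\surp(F))$), fix $V$ with $\delta(F[V]) = \surp(F)$. The plan is to work with $F' := F[V]$ and show that its lean kernel is $\top$; the maximal autarky $\vp$ of $F'$ then satisfies $\vp * F' = \top$ and hence $\var(\vp) = V$, and $\vp$ lifts to an autarky of $F$ with the same variable set (a clause $C \in F$ touched by $\vp$ has $\var(C) \cap V \ne \es$, so $C$'s restriction to $V$ lies in $F'$, is satisfied by $\vp$, and therefore $C$ itself is satisfied). The transfer of the hypothesis from $F$ to $F'$ relies on two observations: (i)~$\surp(F') = \surp(F)$, because $F'[W] = F[W]$ for $\es \ne W \sse V$ by transitivity of the restriction, so $\delta(F'[W]) \ge \surp(F)$ with equality at $W = V$; and (ii)~$\vdeg_{F'}(v) = \vdeg_F(v)$ for every $v \in V$, as restricting to $V$ does not alter the number of occurrences of $V$-variables. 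Together these give $\minvdeg(F') \ge \minvdeg(F) > \nonmer(\surp(F)) = \nonmer(\surp(F'))$, so by the contrapositive of Theorem~\ref{thm:leanminvardeg}, $F'$ is not lean.

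To upgrade ``not lean'' to ``lean kernel equals $\top$'', I would argue by contradiction: assume the lean kernel $N$ of $F'$ has $\var(N) = V_N$ with $\es \ne V_N \subsetneq V$. Applying Theorem~\ref{thm:leanminvardeg} to the lean multi-clause-set $N$ produces a variable $v \in V_N$ with $\vdeg_N(v) \le \nonmer(\surp(N))$ and $\ldeg_N(v), \ldeg_N(\ol v) \le \surp(N)$, and the aim is to parlay this into $\minvdeg(F) \le \nonmer(\surp(F))$, contradicting the hypothesis. The hard part will be the bookkeeping, because $N$ is \emph{not} in general equal to $F[V_N]$: clauses of $F$ touching both $V_N$ and $V \sm V_N$ contribute restricted clauses to $F[V_N]$ but have been removed from $F'$ by the autarky producing $N$. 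The plan is therefore to partition the $F$-clauses touching $V$ into those touching only $V \sm V_N$, both $V_N$ and $V \sm V_N$, or only $V_N$ within $V$, and then use the equality $\delta(F[V]) = \surp(F)$ together with the inequality $\delta(F[V_N]) \ge \surp(F)$ (by surplus minimality) to bound the sizes of these classes, translating the bookkeeping into the degrees $\vdeg_F(v)$ versus $\vdeg_N(v)$ of the extracted variable in order to close the contradiction.
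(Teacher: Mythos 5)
Your handling of the $\surp(F)\le 0$ case is exactly the paper's, and your two observations about $F'=F[V]$ (that $\surp(F')=\surp(F)$ via transitivity of restriction, and that $\vdeg_{F'}(v)=\vdeg_F(v)$ for $v\in V$) are both correct. The gap is in what you do next: applying the contrapositive of Theorem~\ref{thm:leanminvardeg} as a black box only yields that $F'$ is \emph{not lean}, which is strictly weaker than the satisfiability of $F'$ that the corollary asserts (an autarky $\vp$ with $\var(\vp)=V$ is precisely a satisfying assignment of $F'$). Your proposed upgrade via the lean kernel $N$ of $F'$ is where the argument stalls, and for a structural reason: after extracting a variable $v\in\var(N)$ with $\vdeg_N(v)\le\nonmer(\surp(N))$ you would need something like $\nonmer(\surp(N))+r_v\le\nonmer(\surp(F))$, where $r_v$ counts occurrences of $v$ in clauses removed by the maximal autarky, and there is no clean relation between $\surp(N)$ and $\surp(F)$ available — $N$ is not $F[\var(N)]$, and $\var(N)$ can be a proper subset of $V$, so the convenient deficiency/surplus identities break down. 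You flag this yourself, but the bookkeeping you sketch is not carried out and I do not see how it closes.

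The paper avoids the lean kernel entirely and does not invoke Theorem~\ref{thm:leanminvardeg} as a black box. Instead, Lemma~\ref{lem:charakappcor} re-runs the \emph{proof} of Theorem~\ref{thm:leanminvardeg} under the weaker hypothesis: assume for contradiction that $F'=F[V]$ is unsatisfiable, take a minimally unsatisfiable $F''\sse F'$, and apply Lemma~\ref{lem:auxminvardegsigma} (which uses $\surp(F')=\delta(F')$) to conclude $\var(F'')=V$. Because $F''$ spans all of $V$, the deficiency bookkeeping is exact: $\delta(F'')=\delta(F')-(c(F')-c(F''))$, and Theorem~\ref{thm:MUminvdegdef} on $F''$ combined with Corollary~\ref{cor:NMdiff} gives a variable $v$ with $\vdeg_F(v)\le\nonmer(\surp(F))$, contradicting the hypothesis. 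So unsatisfiability of $F[V]$ is refuted directly; there is no need to pass through ``not lean'' or to reason about lean kernels with shrunken variable sets. If you want to repair your write-up, replace the lean-kernel detour with exactly this contradiction argument, reusing Lemma~\ref{lem:auxminvardegsigma}, whose whole purpose in the paper is to make this step clean.
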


The quantities $\minvdeg(F)$ and $\nonmer(\surp(F))$ (resp.\ $\nonmer(\delta(F))$) are computable in polynomial time, and so the applicability of Corollary \ref{sec:proofgencase} can be checked in polynomial time. We conjecture that also ``constructivisation'' of Corollary \ref{cor:leanminvardegsat} can be done in polynomial time:
\begin{conj}\label{con:findauthard}
   There is a poly-time algorithm for computing a non-trivial autarky in case of $\minvdeg(F) > \nonmer(\surp(F))$ (or $\minvdeg(F) > \nonmer(\delta(F))$) for matching-lean clause-sets $F$.
\end{conj}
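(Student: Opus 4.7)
My plan starts with the observation that the hypothesis of the conjecture is itself testable in polynomial time: $\minvdeg(F)$ and $\delta(F)$ are trivially computable, while $\surp(F)$ together with a variable set $V_0 \sse \var(F)$ of minimal cardinality with $\delta(F[V_0]) = \surp(F)$ can be extracted in polynomial time by a bipartite-matching argument on the variable-clause incidence bipartite graph (the surplus is the relevant Hall defect, and minimising $V_0$ reduces to submodular minimisation of the function $W \mapsto \delta(F[W])$). By Corollary~\ref{cor:leanminvardegsat}, under the stated hypothesis an autarky $\vp$ for $F$ with $\var(\vp) = V_0$ exists; equivalently, $V_0$ admits a \emph{total} assignment that satisfies every clause of $F$ meeting $V_0$. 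The remaining task is to produce such a $\vp$ in $\mathrm{poly}(\abs{F})$ time.

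The natural approach is to peel away the two layers of Theorem~\ref{thm:leanminvardeg} algorithmically. First, the hypothesis transfers from $F$ to $F[V_0]$: for $v \in V_0$ every clause of $F$ containing $v$ is retained in $F[V_0]$ with its literal intact, so $\vdeg_{F[V_0]}(v) = \vdeg_F(v)$, whence $\minvdeg(F[V_0]) \ge \minvdeg(F) > \nonmer(\surp(F)) = \nonmer(\surp(F[V_0]))$. We may therefore replace $F$ by $F[V_0]$ and assume $\var(F) = V_0$ and $\delta(F) = \surp(F)$. Second, I would mimic the induction underlying Theorem~\ref{thm:MUminvdegdef}, maintaining a partial autarky $\vp$ and, at each stage, selecting a variable $v \in \var(F) \sm \var(\vp)$ of minimum residual degree in $\vp * F$ and choosing the assignment $\pao{v}{\ve}$ under which the invariant $\minvdeg(\vp' * F) > \nonmer(\surp(\vp' * F))$ is preserved for the extended $\vp'$. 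The shape of the non-Mersenne recursion in Definition~\ref{def:minvdegdef}, combined with Lemma~\ref{lem:auxSMUSAT}, suggests that at least one $\ve \in \set{0,1}$ preserves the invariant, yielding a deterministic greedy construction terminating after at most $n(F)$ stages.

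The principal obstacle is the lifting step from minimally unsatisfiable to merely lean clause-sets. The existence proof of Theorem~\ref{thm:leanminvardeg} presumably relies on \emph{saturation}: embedding $F[V_0]$ into some $F' \in \Smusat$ with $\var(F') = V_0$, an operation which, in its naive form, requires a SAT test for each added literal and hence is not polynomial. To circumvent this I would attempt to replace full saturation by a weaker \emph{matching-saturation}, adding only literals whose inclusion is witnessed by a local matching-flow certificate on the variable-clause incidence graph; such additions preserve $\surp$ (and therefore the ratio $\minvdeg/\nonmer(\surp)$ driving the recursion) and can be implemented in polynomial time. A complementary line of attack is to argue directly, using the matching-leanness of $F$ and the degree hypothesis, that the residuals $\vp * F$ encountered during the greedy step automatically stay matching-lean with controlled surplus, so that no appeal to saturation of the original $F$ is required. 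The main technical risk is that branch selection in the greedy step may implicitly demand a satisfiability decision on a sub-instance; if that turns out to be the case, the conjecture may first have to be settled under the weaker hypothesis $\minvdeg(F) > \nonmer(\delta(F))$, where the extra slack coming from $\delta(F) \ge \surp(F)$ can be spent to absorb the branch-selection step, before the sharper surplus-based version becomes accessible.
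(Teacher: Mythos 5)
First, note that the statement you were asked to prove is stated in the paper as Conjecture \ref{con:findauthard} and is explicitly left open there: the paper proves only the non-constructive existence of the autarky (Corollary \ref{cor:leanminvardegsat}) and reduces the algorithmic question, via Lemma \ref{lem:HardMlcr}, to finding a non-trivial autarky for the special class $\Mlcr$ in polynomial time, which it does not solve. So there is no proof in the paper to compare against, and your proposal has to be judged on its own as an attempted resolution of an open problem.

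On those terms there is a genuine gap, and it sits exactly where you flag the ``main technical risk''. The sound parts of your plan are the preprocessing: $\surp(F)$ and a witnessing set $V_0$ are indeed computable in polynomial time (the function $W \mapsto \delta(F[W])$ is submodular), and the hypothesis does transfer to $F[V_0]$, since $\vdeg_{F[V_0]}(v) = \vdeg_F(v)$ for $v \in V_0$ and $\surp(F[V_0]) = \delta(F[V_0]) = \surp(F)$. But after this reduction the task is to produce a \emph{total} satisfying assignment for the satisfiable clause-set $F[V_0]$, and your greedy step does not accomplish this. Maintaining the invariant $\minvdeg(\vp' * F) > \nonmer(\surp(\vp' * F))$ on the residual would only re-certify, via Corollary \ref{cor:leanminvardegsat}, that the residual has \emph{some} non-trivial autarky (and only under the side condition that the residual is still matching-lean with positive surplus); it does not certify that the literal you just set extends to an autarky of $F$ with $\var(\vp) = V_0$, nor that every clause already touched will eventually be satisfied. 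Deciding which $\ve$ keeps the partial assignment extendable is precisely a satisfiability decision on a sub-instance, and neither Lemma \ref{lem:auxSMUSAT} nor the shape of the recursion in Definition \ref{def:minvdegdef} supplies a selection rule: Theorem \ref{thm:MUminvdegdef} is an existence statement about a low-degree variable in a saturated minimally unsatisfiable clause-set, obtained by splitting, and it gives no way to pick a satisfying branch of a satisfiable one. The ``matching-saturation'' device is likewise only named, not defined, and not shown to preserve the relevant invariants. As it stands, the proposal is a research plan whose central step is the open problem itself.
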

See Subsection \ref{sec:findaut} for more discussion on Conjecture \ref{con:findauthard} (there also the remaining details of Corollary \ref{cor:leanminvardegsat} are proven).

\subsection{The special case of minimally unsatisfiable clause-sets}
\label{sec:specialcaseMU}

The main auxiliary lemma is the following statement, which receives its importance from the fact that every minimally unsatisfiable clause-set can be saturated (this method was first applied in \cite{Ku99dKo}).
\begin{lem}\label{lem:auxminvardeg}
  Consider $F \in \Smusati{\delta=k}$ for $k \in \NN$ and a variable $v \in \var(F)$ realising the minimal var-degree (i.e., $\vdeg_F(v) = \minvdeg(F)$). Using $m_0 := \ldeg_F(\ol{v})$ and $m_1 := \ldeg_F(v)$ we have $\pao v{\ve} * F \in \Musati{k-m_{\ve}+1}$ for $\ve \in \set{0,1}$, where $n(\pao v{\ve} * F) = n(F) - 1$. Since minimally unsatisfiable clause-sets have deficiency at least one, we get $m_{\ve} \le k$.
\end{lem}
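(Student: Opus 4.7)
The plan is to apply Lemma \ref{lem:auxSMUSAT}(\ref{lem:auxSMUSAT1}) and then carry out two counting arguments. Membership $\pao{v}{\ve} * F \in \Musat$ for both $\ve \in \set{0,1}$ is immediate from Lemma \ref{lem:auxSMUSAT}(\ref{lem:auxSMUSAT1}); what remains is to pin down the deficiency, which reduces to showing $c(\pao{v}{\ve} * F) = c(F) - m_\ve$ and $n(\pao{v}{\ve} * F) = n(F) - 1$. Writing $\ell_\ve$ for the literal of $v$ satisfied by $\pao{v}{\ve}$ (so $\ell_0 = \ol{v}$, $\ell_1 = v$, and $m_\ve = \ldeg_F(\ell_\ve)$), the restriction removes the $m_\ve$ clauses containing $\ell_\ve$ and merely shortens the $m_{1-\ve}$ clauses containing $\ol{\ell_\ve}$. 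For the clause count, the delicate point is to rule out that two distinct clauses of $F$ collapse to the same shortened clause. I would argue that the only non-trivial case for $C_1 \ne C_2$ in $F$ to yield the same restriction is (up to relabelling) $\ol{\ell_\ve} \in C_1$ and $v \notin \var(C_2)$, giving $C_2 = C_1 \sm \set{\ol{\ell_\ve}} \sst C_1$; then $C_2$ subsumes $C_1$, so $F \sm \set{C_1}$ is logically equivalent to $F$ and hence unsatisfiable, contradicting $F \in \Musat$. This yields $c(\pao{v}{\ve} * F) = c(F) - m_\ve$.

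For the variable count, clearly $v$ vanishes. To rule out the vanishing of any other variable, I would suppose for contradiction that some $w \in \var(F) \sm \set{v}$ also disappears; then every clause of $F$ containing $w$ or $\ol{w}$ must be satisfied by $\pao{v}{\ve}$ and thus must contain $\ell_\ve$, so
\[
  \vdeg_F(w) \le \ldeg_F(\ell_\ve) = m_\ve \le m_\ve + m_{1-\ve} = \vdeg_F(v) = \minvdeg(F).
\]
Since $v$ realises the minimum variable-degree, the whole chain collapses to equalities and in particular $m_{1-\ve} = 0$. But then $\ol{\ell_\ve}$ does not occur in $F$, so $\pao{v}{\ve}$ itself is a non-trivial autarky for $F$, contradicting the leanness of minimally unsatisfiable clause-sets. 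This gives $n(\pao{v}{\ve} * F) = n(F) - 1$.

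Combining the two counts yields $\delta(\pao{v}{\ve} * F) = (c(F) - m_\ve) - (n(F) - 1) = k - m_\ve + 1$, and the final inequality $m_\ve \le k$ falls out of the standard lower bound $\delta(G) \ge 1$ for $G \in \Musat$. The main obstacle is the no-collapse argument for the clause count, which is the only place where the $\Musat$ hypothesis must be exploited beyond saturation; everything else is direct bookkeeping built on the minimum-degree choice of $v$ and the fact that minimally unsatisfiable clause-sets have no pure variables.
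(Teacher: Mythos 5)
Your proof is correct and follows essentially the same route as the paper: membership in $\Musat$ via Lemma \ref{lem:auxSMUSAT}(\ref{lem:auxSMUSAT1}), then two counting arguments (no pure variables plus minimality of $\vdeg_F(v)$ to show $n$ drops by exactly one, and the deficiency formula yielding $m_\ve \le k$). You in fact supply two details the paper's one-line proof leaves implicit --- the subsumption argument ruling out clause collapse under $\pao{v}{\ve}$, and the explicit inequality chain showing no variable other than $v$ can vanish --- and both are handled correctly.
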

\begin{proof} We have $n(\pao v{\ve} * F) = n(F) - 1$ since $F$ contains no pure variable, while $v$ realises the minimum of var-degrees. Thus $\delta(\pao v{\ve} * F) = \delta(F) - m_{\ve} + 1$, while $\pao v{\ve} * F \in \Musat$ by Lemma \ref{lem:auxSMUSAT}, Part \ref{lem:auxSMUSAT1}.
\end{proof}

\begin{thm}\label{thm:MUminvdegdef}
  For all $k \in \NN$ and $F \in \Musati{\delta \le k}$ we have $\minvdeg(F) \le \nonmer(k)$. More precisely, for $n(F) > 0$ there exists a variable $v \in \var(F)$ with $\vdeg_F(v) \le \nonmer(k)$ and $\ldeg_F(v), \ldeg_F(\ol{v}) \le k$.
\end{thm}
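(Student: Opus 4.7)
The plan is to prove, by induction on $n(F)$, the slightly stronger statement that every $F \in \Musat$ with $n(F) \ge 1$ admits $v \in \var(F)$ with $\vdeg_F(v) \le \nonmer(\delta(F))$ and $\ldeg_F(v), \ldeg_F(\bar v) \le \delta(F)$; the theorem as stated follows because $\nonmer$ is monotone (Corollary \ref{cor:NMmon}). A preliminary reduction lets one assume $F$ is saturated: by the saturation procedure from Section \ref{sec:MUprelim}, every $F \in \Musat$ embeds into some $F' \in \Smusat$ with $\var(F') = \var(F)$ and $\delta(F') = \delta(F)$, obtained only by enlarging clauses, so literal (and hence variable) degrees only grow, and any witness variable for $F'$ witnesses the bounds for $F$. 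The base case $n(F) = 1$ forces $F = \{\{v\}, \{\bar v\}\}$, where the bounds hold trivially.

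For the inductive step, fix $F \in \Smusat$ with $n(F) \ge 2$, pick $v \in \var(F)$ realising $\minvdeg(F)$, and set $m_0 := \ldeg_F(\bar v) \ge 1$ and $m_1 := \ldeg_F(v) \ge 1$ (minimally unsatisfiable clause-sets have no pure literals). Lemma \ref{lem:auxminvardeg} gives $\pao v \ve * F \in \Musati{\delta(F) - m_\ve + 1}$ together with $m_\ve \le \delta(F)$ for each $\ve \in \{0,1\}$. If $\min(m_0, m_1) = 1$, say $m_0 = 1$, then the bound $\nonmer(k) \ge k+1$ recorded in Section \ref{sec:nonmer} yields $\vdeg_F(v) = 1 + m_1 \le 1 + \delta(F) \le \nonmer(\delta(F))$, while the literal degrees $1, m_1$ are at most $\delta(F)$; so $v$ itself witnesses the conclusion.

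The core of the argument is the remaining case $m_0, m_1 \ge 2$. Applying the induction hypothesis to each $\pao v \ve * F$ (which is MU and has $n(F) - 1$ variables) produces $w_\ve \in \var(\pao v \ve * F) \subseteq \var(F) \setminus \{v\}$ with $\vdeg_{\pao v \ve * F}(w_\ve) \le \nonmer(\delta(F) - m_\ve + 1)$. Since $\pao v \ve$ deletes only the $m_\ve$ clauses it satisfies (the shortened clauses retain every occurrence of $w_\ve$), lifting back gives $\vdeg_F(w_\ve) \le \nonmer(\delta(F) - m_\ve + 1) + m_\ve$. The minimality of $v$ yields $m_0 + m_1 = \vdeg_F(v) \le \vdeg_F(w_\ve)$, whence $m_{1-\ve} \le \nonmer(\delta(F) - m_\ve + 1)$ for both $\ve$. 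Assuming WLOG $m_0 \le m_1$ and setting $i := m_1 \in \{2, \dots, \delta(F)\}$, the combinatorial inequality $m_0 + m_1 \le 2 m_1 = 2 i$ together with the inductive inequality $m_0 + m_1 \le \nonmer(\delta(F) - i + 1) + i$ combine via Definition \ref{def:minvdegdef} into
\[
m_0 + m_1 \le \min(2 i, \nonmer(\delta(F) - i + 1) + i) \le \nonmer(\delta(F)),
\]
while $m_0, m_1 \le \delta(F)$ has already been recorded; so $v$ itself witnesses the claim. The main obstacle is recognising that $i := \max(m_0, m_1)$ is the right value to feed into the defining recursion of $\nonmer(\delta(F))$: precisely this choice lets a single term $\min(2i, \nonmer(\delta(F) - i + 1) + i)$ absorb both the WLOG combinatorial bound and the inductive bound coming from $\pao v 1 * F$.
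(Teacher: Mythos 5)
Your proof is correct and follows essentially the same route as the paper: saturate, split on a minimum-degree variable $v$, apply the induction hypothesis together with Lemma \ref{lem:auxminvardeg} to the branch for the sign of larger literal degree, lift the resulting witness back, and observe that $v$ itself then satisfies $\vdeg(v)\le\min(2i,\nonmer(k-i+1)+i)\le\nonmer(k)$. The only cosmetic differences are that you induct on $n(F)$ rather than on $k$ and split off the case $\min(m_0,m_1)=1$ explicitly (the paper's recursion also covers it, since it only requires $\max(m_0,m_1)\ge 2$), and you invoke the induction hypothesis on both branches when only the one for the larger literal degree is needed.
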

\begin{proof} The assertion is known for $k=1$, so assume $k > 1$, and we apply induction on $k$. Assume $\delta(F) = k$ (due to $k > 1$ we have $n(F) > 1$). Saturate $F$ and obtain $F'$. Consider a variable $v \in \var(F')$ realising the min-var-degree of $F'$. If $\vdeg_{F'}(v) = 2$ then we are done, so assume $\vdeg_{F'}(v) \ge 3$. Let $i := \max(\ldeg_{F'}(v),\ldeg_{F'}(\ol{v}))$; so $\vdeg_{F'}(v) \le 2 i$. W.l.o.g.\ assume that $i = \ldeg_{F'}(v)$. By Lemma \ref{lem:auxminvardeg} we get $2 \le i \le k$. Applying the induction hypothesis and Lemma \ref{lem:auxminvardeg} we obtain a variable $w \in \var(G)$ for $G := \pao v1 * F$ with $\vdeg_G(w) \le \nonmer(k-i+1)$. By definition we have $\vdeg_{F'}(w) \le \vdeg_G(w) + \ldeg_{F'}(v)$. Altogether we get $\minvdeg(F) \le \min(2 i, \nonmer(k-i+1) + i) \le \nonmer(k)$.
\end{proof}

It is interesting to generalise Theorem \ref{thm:MUminvdegdef} for generalised clause-sets (see \cite{Kullmann2007ClausalFormZI,Kullmann2007ClausalFormZII} for a systematic study, and \cite{Kullmann2011ClausalForm} for the underlying report). Generalised clause-sets have literals ``$v \not= \ve$'' for variables $v$ with domains $D_v$ and values $\ve \in D_v$, and the deficiency is generalised by giving every variable a weight $\abs{D_v} - 1$ (which is $1$ in the boolean case). The base case of deficiency $k=1$ is handled in Lemma 5.4 in \cite{Kullmann2007ClausalFormZII}, showing that for generalised clause-sets we have here $\minvdeg(F) \le \max_{v \in \var(F)} \abs{D_v}$. But $k \ge 2$ requires more work:
\begin{enumerate}
\item The basic method of saturation is not available for generalised clause-sets, as discussed in Subsection 5.1 in \cite{Kullmann2007ClausalFormZII}. Thus the proofs for the boolean case seem not to be generalisable.
\item Stipulating the effects of saturation via the ``substitution stability parameter regarding irredundancy'', in Corollary 5.10 in \cite{Kullmann2007ClausalFormZII} one finds a first approach towards generalising the basic bound $\minvdeg(F) \le 2 \delta(F)$ (for the boolean case) by $\minvdeg(F) \le \max_{v \in \var(F)} \abs{D_v} \cdot \delta(F)$.
\item Another approach uses translations to boolean clause-sets. The ``generic translation scheme'' (see \cite{Kullmann2010GreenTao,Kullmann2007ClausalFormZII}) allows (for certain instances) to preserve the deficiency and the other structures relevant here. So we get general upper bounds for the minimum number of occurrences of variables in generalised clause-sets from the boolean case. But further investigations are needed in these bounds.
\end{enumerate}

\subsection{Proof of the general case}
\label{sec:proofgencase}

Now consider an arbitrary (multi-)clause-set $F$. Consider a set of variables $\es \not= V \sse \var(F)$ realising the surplus of $F$, i.e., such that $\delta(F[V])$ is minimal. If $F[V]$ would be satisfiable, then a satisfying assignment would give a non-trivial autarky for $F$. Assuming that $F$ is lean thus yields that $F[V]$ must be unsatisfiable. So there exists a minimally unsatisfiable $F' \sse F[V]$. If now $\var(F') \not= \var(F[V]) = V$ would be the case, then we would loose control over the deficiency of $F'$. Fortunately this can not happen, as the following lemma shows.

\begin{lem}\label{lem:auxminvardegsigma}
  Consider a multi-clause-set $F$ with $\surp(F) = \delta(F)$. Then for every unsatisfiable sub-multi-clause-set $F' \le F$ we have $\var(F') = \var(F)$.
\end{lem}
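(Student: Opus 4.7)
The plan is to reduce to the case when $F'$ is minimally unsatisfiable, and then derive a contradiction from the surplus hypothesis via a count of $F$-clauses that avoid a complementary variable set.

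First, I will pass to a minimally unsatisfiable sub-clause-set $F'' \sse F'$, which exists since $F'$ is unsatisfiable. Since $F'' \sse F'$ implies $\var(F'') \sse \var(F')$, the desired equality $\var(F') = \var(F)$ will follow as soon as I prove the stronger statement $\var(F'') = \var(F)$. So from now on I assume without loss of generality that $F'$ itself is minimally unsatisfiable (in particular, an ordinary clause-set, since any duplicate in an MU multi-clause-set could be removed without affecting unsatisfiability). Consequently $\delta(F') \ge 1$ by the Aharoni--Linial bound cited in the introduction.

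Suppose for contradiction that $V := \var(F') \subsetneq \var(F)$, and put $W := \var(F) \sm V \not= \es$. The key step is to apply the surplus hypothesis at $W$: from $\surp(F) = \delta(F)$ one gets $\delta(F[W]) \ge \delta(F)$. The crucial point is that $n(F[W]) = \abs{W}$, because every $w \in W \sse \var(F)$ occurs in some clause $C$ of $F$, and since $w \in \var(C) \cap W$ this clause touches $W$ and so is retained (with $w$ still present) in $F[W]$. Substituting $n(F[W]) = \abs{W}$ and $n(F) = \abs{V} + \abs{W}$ into $\delta(F[W]) \ge \delta(F)$ rearranges to $c(F) - c(F[W]) \le \abs{V}$.

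Finally, I will interpret $c(F) - c(F[W])$ as the total multiplicity in $F$ of clauses whose variables lie entirely in $V$, i.e., the clauses discarded when forming $F[W]$. Every clause of $F' \le F$ has its variables in $\var(F') = V$ and so is of this form, giving $c(F') \le c(F) - c(F[W]) \le \abs{V}$. On the other hand, $F'$ being minimally unsatisfiable with $n(F') = \abs{V}$ forces $c(F') \ge \abs{V} + 1$, the required contradiction. The main obstacle is choosing the right variable set at which to apply the surplus condition: the natural first instinct is to use $V$ itself, but this gives information about $F[V]$ which does not directly interact with the unsatisfiability of $F' \le F[V]$; passing instead to the complement $W$ turns the surplus inequality into an upper bound on the number of $F$-clauses available to contain $F'$, which then clashes with the deficiency lower bound for minimally unsatisfiable clause-sets.
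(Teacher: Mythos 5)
Your proof is correct and follows essentially the same route as the paper's: pass to a minimally unsatisfiable $F'' \sse F'$, apply the surplus hypothesis at the complementary variable set $W = \var(F) \sm \var(F'')$, and derive a contradiction with $\delta(F'') \ge 1$. The paper merely packages your count of $F$-clauses avoiding $W$ into the single inequality $\delta(F'') + \delta(F[W]) \le \delta(F)$, so the two arguments coincide.
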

\begin{proof} Assume $\var(F') \subset \var(F)$, and consider a minimally unsatisfiable sub-clause-set $F'' \sse F'$. By definition we have $\delta(F'') + \delta(F[\var(F) \sm \var(F'')]) \le \delta(F)$, where $\delta(F[\var(F) \sm \var(F'')]) \ge \surp(F) = \delta(F)$, from which we conclude $\delta(F'') \le 0$, but $\delta(F'') \ge 1$ must hold since $F''$ is minimally unsatisfiable.
\end{proof}

Finally we are able to prove Theorem \ref{thm:leanminvardeg}. Recall that $F$ is a lean multi-clause-set with $n(F) > 0$, and we have to show the existence of a variable $v$ with $\vdeg_F(v) \le \nonmer(\surp(F))$ and $\ldeg_F(v), \ldeg_F(\ol{v}) \le \surp(F)$.

Consider $\es \not= V \sse \var(F)$ with $\delta(F[V]) = \surp(F)$, and let $F' := F[V]$. $F'$ is unsatisfiable, since $F$ is lean. Because of $\delta(F') = \surp(F)$ we have $\delta(F') = \surp(F')$. Consider some minimally unsatisfiable $F'' \sse F'$. By Lemma \ref{lem:auxminvardegsigma} we have $\var(F'') = \var(F')$. So we get $\delta(F'') = \delta(F') - (c(F') - c(F''))$. By Theorem \ref{thm:MUminvdegdef} there is $v \in \var(F'')$ with $\vdeg_{F''}(v) \le \nonmer(\delta(F'')) = \nonmer(\delta(F') - (c(F') - c(F''))) \le \nonmer(\delta(F')) - (c(F') - c(F''))$ and $\ldeg_{F''}(v), \ldeg_{F''}(\ol{v}) \le \delta(F'') = \delta(F') - (c(F') - c(F''))$. Finally we have $\vdeg_F(v) \le \vdeg_{F''}(v) + (c(F') - c(F''))$ (note that all occurrences of $v$ in $F$ are also in $F'$), and similarly for the literal degrees. QED

\begin{corol}\label{cor:leanminvardegdef}
  For a lean multi-clause-set $F$ with $n(F) > 0$ we have $\minvdeg(F) \le \nonmer(\delta(F))$.
\end{corol}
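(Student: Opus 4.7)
The plan is to derive the corollary directly from Theorem \ref{thm:leanminvardeg} together with the monotonicity of $\nonmer$ established in Corollary \ref{cor:NMmon}. Since the upper bound on $\minvdeg(F)$ in Theorem \ref{thm:leanminvardeg} is expressed via the surplus $\surp(F)$, and the deficiency $\delta(F)$ is an upper bound on the surplus (obtained by choosing $V = \var(F)$ in the definition of surplus), monotonicity immediately translates the sharper bound into the weaker bound stated in the corollary.

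More concretely, I would proceed as follows. First, observe that $F$ being lean with $n(F) > 0$ implies $F$ is matching lean (since every matching autarky is an autarky) and $F \not= \top$; hence by the remarks in Subsection \ref{sec:prelimAut} we have $\surp(F) \ge 1$, so $\surp(F) \in \NN$ lies in the domain of $\nonmer$. Next, by considering $V := \var(F)$ in the definition of the surplus, and using $F[\var(F)] = F$, we get $\surp(F) \le \delta(F[\var(F)]) = \delta(F)$, so in particular $\delta(F) \in \NN$ as well.

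Applying Theorem \ref{thm:leanminvardeg} yields $\minvdeg(F) \le \nonmer(\surp(F))$. By Corollary \ref{cor:NMmon}, $\nonmer$ is strictly increasing on $\NN$, so from $\surp(F) \le \delta(F)$ we conclude $\nonmer(\surp(F)) \le \nonmer(\delta(F))$, giving $\minvdeg(F) \le \nonmer(\delta(F))$ as desired.

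There is no real obstacle here: the proof is a one-line chain once the monotonicity of $\nonmer$ and the bound $\surp(F) \le \delta(F)$ are invoked. The only minor point requiring care is to ensure both arguments of $\nonmer$ are positive integers, which is handled by the matching-lean property of lean multi-clause-sets with at least one variable.
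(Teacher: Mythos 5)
Your proof is correct and is exactly the intended derivation: the paper leaves this corollary unproved because it follows immediately from Theorem~\ref{thm:leanminvardeg} via $\surp(F) \le \delta(F)$ and the monotonicity of $\nonmer$ from Corollary~\ref{cor:NMmon}. The extra care you take to verify $\surp(F) \ge 1$ (so that $\nonmer$ is defined) is a nice touch, matching the paper's remark that lean implies matching-lean implies $\surp(F) \ge 1$.
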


\begin{corol}\label{cor:charsurp1}
  Consider a lean multi-clause-set $F$.
  \begin{enumerate}
  \item\label{cor:charsurp1a} $\surp(F) = 1$ holds if and only if $\minvdeg(F) = 2$ holds.
  \item\label{cor:charsurp1b} $\minvdeg(F) = 3$ implies $\surp(F) = 2$.
  \end{enumerate}
\end{corol}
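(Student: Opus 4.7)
The plan is to combine the main theorem (Theorem~\ref{thm:leanminvardeg}) with two elementary bookkeeping facts recalled in the preliminaries. First, taking $V = \set{v}$ with $v \in \var(F)$ gives $\delta(F[V]) = \vdeg_F(v) - 1$, hence $\surp(F) \le \minvdeg(F) - 1$ whenever $n(F) > 0$. Second, a lean multi-clause-set is matching-lean, so $\surp(F) \ge 1$ holds as soon as $F \ne \top$. In addition I will use that a lean $F$ with $n(F) > 0$ has no variable of degree $1$: such a variable, set to the sign of its unique occurrence, would satisfy its (unique) clause without touching any other, yielding a non-trivial (matching) autarky.

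For part~(\ref{cor:charsurp1a}), the forward direction: the assumption $\surp(F) = 1$ forces $\var(F) \ne \es$, so Theorem~\ref{thm:leanminvardeg} applies and gives $\minvdeg(F) \le \nonmer(1) = 2$; combined with $\minvdeg(F) \ge 2$ from the degree-$1$ observation above, we conclude $\minvdeg(F) = 2$. For the converse, $\minvdeg(F) = 2$ entails $n(F) > 0$ and $F \ne \top$, so the bracket $1 \le \surp(F) \le \minvdeg(F) - 1 = 1$ forces $\surp(F) = 1$.

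For part~(\ref{cor:charsurp1b}), the assumption $\minvdeg(F) = 3$ guarantees $n(F) > 0$, and the two bracketing inequalities give $1 \le \surp(F) \le 2$. The already-established part~(\ref{cor:charsurp1a}) rules out $\surp(F) = 1$ (which would force $\minvdeg(F) = 2$, contradicting $\minvdeg(F) = 3$), leaving $\surp(F) = 2$.

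No step presents a genuine obstacle: the whole corollary is a direct evaluation of Theorem~\ref{thm:leanminvardeg} at the small values $\nonmer(1) = 2$ and $\nonmer(2) = 4$, together with the standing bracket $1 \le \surp(F) \le \minvdeg(F) - 1$. The only mildly delicate point is the exclusion of degree-$1$ variables from lean clause-sets, which is what upgrades ``$\minvdeg(F) \le 2$'' to ``$\minvdeg(F) = 2$'' in the forward direction of~(\ref{cor:charsurp1a}).
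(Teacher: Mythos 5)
Your proof is correct and matches the paper's argument essentially step for step: both directions of part (a) use Theorem~\ref{thm:leanminvardeg} with $\nonmer(1)=2$, the bracket $1 \le \surp(F) \le \minvdeg(F)-1$ (matching-leanness plus the $V=\set{v}$ bound), and the exclusion of degree-$1$ variables in a lean clause-set, while part (b) is deduced from part (a) and the same bracket. The only cosmetic difference is your closing remark invoking $\nonmer(2)=4$, which is not actually used in the deduction of part (b).
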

\begin{proof} First consider Part \ref{cor:charsurp1a}. If $\surp(F) = 1$ (so $n(F) > 0$), then by Theorem \ref{thm:leanminvardeg} we have $\minvdeg(F) \le \nonmer(1) = 2$, while in case of $\minvdeg(F) = 1$ there would be a matching autarky for $F$. If on the other hand $\minvdeg(F) = 2$ holds, then by definition $\surp(F) \le 2 - 1 = 1$, while $\surp(F) \ge 1$ holds since $F$ is matching lean. For Part \ref{cor:charsurp1b} note that due to $\surp(F)+1 \le \minvdeg(F)$ we have $\surp(F) \le 2$, and then the assertion follows by Part \ref{cor:charsurp1a}.
\end{proof}

Remarks:
\begin{enumerate}
\item If $F$ is lean, then $\surp(F) = 2$ implies $\minvdeg(F) \in \set{3,4}$. An example for $\minvdeg(F) = 4$ is given by the full unsatisfiable clause-set with $2$ variables.
\item Is there a minimally unsatisfiable $F$ with $\minvdeg(F) = 4$ and $\surp(F)=3$?
\item More generally, is there for every $k \in \NN$ a minimally unsatisfiable $F$ with $\sigma(F) = k$ and $\minvdeg(F) = k+1$?
\end{enumerate}

\subsection{On finding the autarky}
\label{sec:findaut}

The following lemma (with Theorem \ref{thm:leanminvardeg}) yields the proof of Corollary \ref{cor:leanminvardegsat}:
\begin{lem}\label{lem:charakappcor}
  Consider a matching-lean multi-clause-set $F$ with $n(F) > 0$. If we have $\minvdeg(F) > \nonmer(\surp(F))$, then all $F[V]$ for $\es \subset V \sse \var(F)$ with $\delta(F[V]) = \surp(F)$ are satisfiable.
\end{lem}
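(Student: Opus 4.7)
My plan is to prove the contrapositive. Fix any $\es \subset V \sse \var(F)$ with $\delta(F[V]) = \surp(F)$, assume $F[V]$ is unsatisfiable, and I will construct a variable $v \in \var(F)$ with $\vdeg_F(v) \le \nonmer(\surp(F))$, contradicting the hypothesis $\minvdeg(F) > \nonmer(\surp(F))$. Matching-leanness together with $n(F) > 0$ guarantees $\surp(F) \ge 1$, so $\nonmer(\surp(F))$ is defined.

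First I would pass from $F[V]$ to a minimally unsatisfiable sub-multi-clause-set $F'' \le F[V]$. The crucial point is to secure $\var(F'') = V$, because only then is $\delta(F'')$ rigidly tied to $\surp(F)$. This is delivered by Lemma \ref{lem:auxminvardegsigma} applied to $F[V]$ in place of $F$, whose hypothesis $\surp(F[V]) = \delta(F[V])$ I would verify by a direct computation: iterated restriction satisfies $F[V][V'] = F[V']$ for $\es \subset V' \sse V$, so $\delta(F[V'])\ge \surp(F) = \delta(F[V])$ for all such $V'$, with equality at $V' = V$.

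Writing $d := c(F[V]) - c(F'') \in \NNZ$, the identity $\var(F'') = V$ gives $\delta(F'') = \surp(F) - d$, and $\delta(F'') \ge 1$ forces $d \le \surp(F) - 1$. Theorem \ref{thm:MUminvdegdef} then supplies a variable $v \in V$ with $\vdeg_{F''}(v) \le \nonmer(\delta(F''))$. To lift this back to $F$, I use that $v \in V$ implies every occurrence of $v$ or $\ol{v}$ in $F$ sits, with its multiplicity, inside $F[V]$, so $\vdeg_F(v) = \vdeg_{F[V]}(v) \le \vdeg_{F''}(v) + d$, since each of the $d$ clauses of $F[V]$ absent from $F''$ contributes at most one $v$-occurrence. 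Finally Corollary \ref{cor:NMdiff} closes the loop: $\vdeg_F(v) \le \nonmer(\surp(F) - d) + d \le \nonmer(\surp(F))$, the desired contradiction.

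The main obstacle is exactly the guarantee $\var(F'') = V$. Without it $\delta(F'')$ could drop far below $\surp(F)$ while the ``missing'' $d$ clauses of $F[V] \sm F''$ might carry many occurrences of the chosen variable, destroying the tight cancellation between the deficiency drop and the degree gain. Lemma \ref{lem:auxminvardegsigma} removes precisely this obstruction, and then the sub-linearity of $\nonmer$ from Corollary \ref{cor:NMdiff} makes the two effects balance out term-by-term.
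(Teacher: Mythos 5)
Your proof is correct and follows essentially the same route as the paper: the paper's proof of Lemma~\ref{lem:charakappcor} simply invokes the argument already given for Theorem~\ref{thm:leanminvardeg} in Subsection~\ref{sec:proofgencase}, and your write-up reproduces that argument in full --- passing to a minimally unsatisfiable $F'' \sse F[V]$, securing $\var(F'')=V$ via Lemma~\ref{lem:auxminvardegsigma} after verifying $\surp(F[V])=\delta(F[V])$, applying Theorem~\ref{thm:MUminvdegdef}, and balancing the deficiency drop against the degree gain via Corollary~\ref{cor:NMdiff}. The only difference is presentational: you spell out the identity $\surp(F[V])=\delta(F[V])$ and the counting step $\vdeg_{F[V]}(v)\le\vdeg_{F''}(v)+d$ explicitly, whereas the paper treats them tersely.
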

\begin{proof} If some $F[V]$ would be unsatisfiable, then by the proof of Theorem \ref{thm:leanminvardeg} in Subsection \ref{sec:proofgencase} there would be a variable $v$ with $\vdeg_F(v) \le \nonmer(\surp(F))$.
\end{proof}

Now consider a matching-lean multi-clause-set $F$ with $n(F) > 0$, where Corollary \ref{cor:leanminvardegsat} is applicable (recall that we have $\surp(F) \ge 1$), that is, we have $\minvdeg(F) > \nonmer(\surp(F))$. So we know that $F$ has a non-trivial autarky. Conjecture \ref{con:findauthard} states that finding such a non-trivial autarky in this case can be done in polynomial time (recall that finding a non-trivial autarky in general is NP-complete, which was shown in \cite{Ku00f}).

The task of actually finding the autarky can be considered as finding a satisfying assignment for the following class $\Mlcr \subset \Sat \cap \Mlean$ of satisfiable(!) clause-sets $F$, obtained by considering all $F[V]$ for minimal sets of variables $V$ with $\delta(F[V]) = \surp(F)$ (where ``CR'' stands for ``critical''):
\begin{defi}\label{def:Mlcr}
  Let \bmm{\Mlcr} be the class of clause-sets $F$ fulfilling the following three conditions:
  \begin{enumerate}
  \item\label{def:Mlcr1} $F$ is matching-lean, has at least one variable, and does not contain the empty clause.
  \item\label{def:Mlcr2} The only $\es \not= V \sse \var(F)$ with $\delta(F[V]) = \surp(F)$ is $V = \var(F)$ (and thus we have $\delta(F) = \surp(F)$).
  \item\label{def:Mlcr3} $\minvdeg(F) > \nonmer(\surp(F))$.
  \end{enumerate}
\end{defi}
It is sufficient to find a non-trivial autarky for this class of satisfiable clause-sets.
\begin{lem}\label{lem:HardMlcr}
  Conjecture \ref{con:findauthard} is equivalent to the statement, that finding a non-trivial autarky for clause-sets in $\Mlcr$ can be achieved in polynomial time.
\end{lem}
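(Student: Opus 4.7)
The plan is to prove both directions of the equivalence separately. The forward direction, assuming Conjecture~\ref{con:findauthard}, is immediate by inclusion: every $F \in \Mlcr$ is by Definition~\ref{def:Mlcr} matching-lean and satisfies $\minvdeg(F) > \nonmer(\surp(F))$, so any algorithm witnessing the conjecture already handles $\Mlcr$.

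For the converse, assume a poly-time algorithm $\mathcal{A}$ producing a non-trivial autarky on inputs from $\Mlcr$. Given a matching-lean multi-clause-set $F$ with $n(F) > 0$ and $\minvdeg(F) > \nonmer(\surp(F))$, I would proceed as follows: first compute $\surp(F)$ together with a witness $V_0 \sse \var(F)$ satisfying $\delta(F[V_0]) = \surp(F)$, using the poly-time matching-based method of \cite{Kullmann2007ClausalFormZI}; then greedily shrink $V_0$ to an inclusion-minimal set $V$ with $\delta(F[V]) = \surp(F)$ by testing removal of each variable in turn (each test being another deficiency computation); set $F' := F[V]$, verify $F' \in \Mlcr$, run $\mathcal{A}$ to obtain a non-trivial autarky $\vp$ for $F'$, and read this $\vp$ as an autarky for $F$.

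The verification $F' \in \Mlcr$ rests on three observations that are immediate from the definitions. Since every clause of $F$ containing some $v \in V$ automatically meets $V$ and is kept with $v$ under restriction, one has $\vdeg_{F'}(v) = \vdeg_F(v)$ for all $v \in V$, and thus $\minvdeg(F') \ge \minvdeg(F) > \nonmer(\surp(F))$. Since $F'[W] = F[W]$ for every $\es \neq W \sse V$, we obtain $\surp(F') = \surp(F)$, hence matching-leanness of $F'$, condition~\ref{def:Mlcr3} of Definition~\ref{def:Mlcr}, and---from the minimality of $V$---condition~\ref{def:Mlcr2}; no empty clause appears in $F'$ since each surviving clause still contains at least one literal on $V$. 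Finally, the lifting of $\vp$ to $F$ is automatic: any clause $C \in F$ touched by $\vp$ satisfies $\var(C) \cap V \neq \es$, hence $C|_V$ lies in $F'$ and is satisfied by $\vp$ through a literal which also occurs in $C$ itself.

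I expect the main point to watch to be the polynomial-time realisation of the surplus computation and of the greedy minimisation step, but both reduce to standard deficiency/matching computations, the latter requiring only $O(n(F))$ such checks. The conceptual core is the identity $F'[W] = F[W]$ for $W \sse V$, which is what makes the minimality of $V$ translate cleanly into condition~\ref{def:Mlcr2} of Definition~\ref{def:Mlcr}, and which simultaneously identifies autarkies of $F'$ with autarkies of $F$ supported on $V$.
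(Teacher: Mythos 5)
Your overall architecture is correct and is what the (unstated) intended proof must do: the forward direction is trivial by inclusion, and for the converse you restrict $F$ to an inclusion-minimal set $V$ realising the surplus, verify $F[V] \in \Mlcr$ using $(F[V])[W] = F[W]$ for $W \sse V$, the preservation $\vdeg_{F[V]}(v) = \vdeg_F(v)$, and $\surp(F[V]) = \surp(F)$, and finally read the autarky of $F[V]$ back as an autarky for $F$ supported on $V$. All of these verifications are right, and the lifting of the autarky is right.

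The gap is the minimisation step. Testing removal of one variable at a time (``$O(n(F))$ deficiency computations'') does not in general produce an inclusion-minimal tight set. The map $V \mapsto \delta(F[V])$ is submodular, so the non-empty sets attaining $\surp(F)$ are closed under union and under non-empty intersection, but they need \emph{not} form a chain: a tight set $V$ with $\delta(F[V \sm \set{v}]) > \surp(F)$ for every $v \in V$ can still strictly contain a tight set $W$ with $\abs{V \sm W} \ge 2$. Concretely, for the clause multiset $\set{\set{a,b}, \set{a,\ol c}, \set{b,c}, \set{\ol b, \ol c}}$ over $\set{a,b,c}$ one has $\surp = 1$, the tight sets are exactly $\set{a}$ and $\set{a,b,c}$, and your greedy from $\set{a,b,c}$ is stuck, even though $\set{a}$ is a proper tight subset. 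This example happens to have $\minvdeg(F) = 2 = \nonmer(\surp(F))$, hence falls just outside the premise of Conjecture~\ref{con:findauthard}, but nothing in your argument shows that the premise prevents the same phenomenon. What the reduction actually needs is the stronger poly-time primitive of computing a \emph{smallest} tight set (equivalently, the unique minimal minimiser of the submodular function restricted to sets containing a chosen variable), which is available from the flow/matching machinery behind the surplus computation, but which is a genuinely different tool from $n(F)$ single-removal deficiency tests. Replace the greedy by that primitive, and the rest of your proof is sound: condition~\ref{def:Mlcr2} of Definition~\ref{def:Mlcr} then holds for $F[V]$, and conditions~\ref{def:Mlcr1} and~\ref{def:Mlcr3} follow exactly as you argue.
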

At the time of writing this article, we are not aware of elements of $\Mlcr$ with a deficiency at least $2$.

\section{On strengthening the bound}
\label{sec:strengthbound}

For a class $\mc{C}$ of clause-sets let $\minvdeg(\mc{C})$ be the supremum of $\minvdeg(F)$ for $F \in \mc{C}$ with $n(F) > 0$. So by Theorem \ref{thm:MUminvdegdef} we have $\minvdeg(\Musati{\delta=k}) \le \nonmer(k)$ for all $k \in \NN$. The task of precisely determining $\minvdeg(\Musati{\delta=k})$ for all $k$ will be pursued in the forthcoming \cite{KullmannZhao2010Extremal}; we need more theory for minimally unsatisfiable clause-sets (especially for unsatisfiable hitting clause-sets), and so here we can only mention some results connected with this article.
\begin{itemize}
\item We can show for infinitely many $k$ that $\minvdeg(\Musati{\delta=k}) = \nonmer(k)$.
\item We can also show that the smallest $k$ where we don't have equality is $k=6$, namely $\minvdeg(\Musati{\delta=6}) = 8 = \nonmer(6) - 1$.
\item Let $\nonmer_1: \NN \ra \NN$ be defined by the recursion as in Definition \ref{def:minvdegdef}, however with different start values, namely $\nonmer_1(k) := \nonmer(k)$ for $1 \le k \le 5$, while $\nonmer_1(6) := \nonmer(6) - 1 = 8$. We have $\nonmer_1(k) = \nonmer(k)$ for $k \notin \set{2^m-m+1 : m \in \NN, m \ge 3}$, while for $k = 2^m-m+1$ we have $\nonmer_1(k) = \nonmer(k) - 1 = 2^m$.
\item With the same proof as for Theorem \ref{thm:MUminvdegdef} we can show $\minvdeg(\Musati{\delta=k}) \le \nonmer_1(k)$ for all $k \in \NN$.
\item It seems that this bound can not be generalised to lean clause-sets (as in Theorem \ref{thm:leanminvardeg}).
\end{itemize}

\begin{conj}\label{con:sharpness}
  For all $k \in \NN$ we have $\minvdeg(\Musati{\delta=k}) \ge \nonmer(k)-1$.
\end{conj}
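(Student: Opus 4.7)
The plan is to exhibit, for each $k \in \NN$, an explicit clause-set $F_k \in \Musati{\delta = k}$ with $\minvdeg(F_k) \ge \nonmer(k) - 1$. Since the conjecture asserts a bound on a supremum over $\Musati{\delta = k}$, a single witness per $k$ suffices, and the task is purely one of existence by construction.

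The construction I would pursue is inductive, mirroring the recursion $\nonmer(k) = \nonmer(k - i(k) + 1) + i(k)$ from Lemma \ref{lem:simprecnm}. Starting from an extremal clause-set $F_{k'}$ of smaller deficiency $k' = k - i(k) + 1$ with $\minvdeg(F_{k'}) \ge \nonmer(k') - 1$, the idea is to build $F_k$ as an inverse of the splitting operation that underlies the proof of Theorem \ref{thm:MUminvdegdef}: take two (appropriately chosen) copies $F_{k'}^0, F_{k'}^1$ of $F_{k'}$, possibly sharing variables, adjoin $\ol{v}$ to $i(k)$ selected clauses of $F_{k'}^0$ and $v$ to $i(k)$ selected clauses of $F_{k'}^1$, and take the union. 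Splitting on the new variable $v$ should recover $F_{k'}^0$ under $\pao{v}{0}$ and $F_{k'}^1$ under $\pao{v}{1}$. The deficiency computation gives $\delta(F_k) = \delta(F_{k'}) + (i(k) - 1) = k$, as required, and $v$ itself has degree $2 i(k) \ge \nonmer(k)$ by the definition of $i(k)$. For small $k$ one handles base cases directly, using the fact (which the authors indicate they can prove in \cite{KullmannZhao2010Extremal}) that $\minvdeg(\Musati{\delta = k}) = \nonmer(k)$ for infinitely many $k$, including small ones, so the inductive engine has a rich supply of starting points.

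Two obstacles dominate. First, one must verify that $F_k$ is \emph{minimally} unsatisfiable rather than merely unsatisfiable: the splitting criterion of Lemma \ref{lem:auxSMUSAT}, Part \ref{lem:auxSMUSAT3}, lifts minimality through the splitting on $v$, but only under conditions on the two halves that prevent new redundancy of old clauses. Second, and harder, the old variables in $F_k$ must still satisfy $\vdeg_{F_k}(w) \ge \nonmer(k) - 1$, which is tight precisely at the jump positions $k \in J$ of Corollary \ref{cor:CharacJ} where $\Delta \nonmer(k) = 2$. At those positions the induction carries only a single unit of slack, so the construction must place the new literal occurrences of $v$ on exactly those clauses of $F_{k'}^0$ and $F_{k'}^1$ that contain the low-degree variables, thereby adding the required occurrences ``for free''. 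The central technical lemma would thus be a compatibility statement: a near-extremal $F_{k'}$ can be chosen so that its low-degree variables are concentrated in a set of $i(k)$ clauses available for the splitting construction. I expect this compatibility to be the main obstacle, likely requiring the finer structural theory of unsatisfiable hitting clause-sets promised in \cite{KullmannZhao2010Extremal}, and possibly forcing a more intricate construction at the jump positions than the naive splitting sketched above.
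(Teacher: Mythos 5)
The statement is Conjecture~\ref{con:sharpness}, which the paper itself leaves open: there is no proof here to compare yours against. The authors report only partial evidence --- $\minvdeg(\Musati{\delta=k}) = \nonmer(k)$ for infinitely many $k$ (including all $k \le 5$), and $\minvdeg(\Musati{\delta=6}) = \nonmer(6) - 1$ --- and explicitly defer the general question to the forthcoming \cite{KullmannZhao2010Extremal}, on the grounds that more structure theory of (unsatisfiable hitting) clause-sets is needed. Your sketch is a plan, not a proof, and you acknowledge this; still, the direction is the natural one, inverting the splitting step behind Theorem~\ref{thm:MUminvdegdef} via the recursion of Lemma~\ref{lem:simprecnm}.

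The gap is where you suspect, but it is worth making it sharper. The cleanest instance of your construction takes $F_{k'} \in \Musati{\delta = k'}$ with $k' = k - i(k) + 1$, picks $G_0 \sse F_{k'}$ with $\abs{G_0} = i(k)$, and sets $F_k := (F_{k'} \sm G_0) \cup \set{C \cup \set{\ol v} : C \in G_0} \cup \set{C \cup \set{v} : C \in G_0}$ for a fresh variable $v$. Then $\delta(F_k) = k$, $\vdeg_{F_k}(v) = 2\, i(k) \ge \nonmer(k)$, and $F_k \in \Musat$ is free by Lemma~\ref{lem:auxSMUSAT}, Part~\ref{lem:auxSMUSAT3}, since $\pao v0 * F_k = \pao v1 * F_k = F_{k'}$; so minimality is in fact not the real obstacle in this clean form. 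The obstacle is that each old variable $w$ gains degree only $\vdeg_{G_0}(w) \le \abs{G_0} = i(k)$, while the target $\nonmer(k) - 1 = (\nonmer(k') - 1) + i(k)$ rises by exactly $i(k)$: every variable at the extremal value $\nonmer(k')-1$ in $F_{k'}$ must occur in \emph{every} one of the $i(k)$ clauses of $G_0$, simultaneously, and nothing in this paper shows that an extremal $F_{k'}$ admitting such a $G_0$ exists. That is precisely the ``compatibility lemma'' you gesture at, and it is the whole content of the conjecture at the jump positions $k \in J$. Two smaller warnings: the looser phrasing ``two copies, possibly sharing variables'' spoils the deficiency accounting (one must share everything outside the $i(k)$ branch clauses to land on $\delta(F_k) = k$), and appealing to unpublished results from \cite{KullmannZhao2010Extremal} for the base cases is not a proof within the present framework. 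The proposal correctly identifies what a proof would have to do, but the decisive step is left open, which is consistent with the statement being a conjecture rather than a theorem.
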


Now we consider the question whether the bound holds for a larger class of clause-sets, that is, whether Theorem \ref{thm:leanminvardeg} can be generalised further, incorporating non-lean clause-sets. We consider the large class $\Mlean$ of matching lean clause-sets, as introduced in \cite{Ku00f}, which is natural, since a basic property of $F \in \Musat$ used in the proof of Theorem \ref{thm:leanminvardeg} is $\delta(F) \ge 1$ for $F \not= \top$, and this actually holds for all $F \in \Mlean$. We will construct for arbitrary deficiency $k \in \NN$ and $K \in \NN$ clause-sets $F \in \Mlean$ of deficiency $k$ where every variable occurs positively at least $K$ times. Thus neither the upper bound $\max(\ldeg_F(v), \ldeg_F(\ol{v})) \le f(\delta(F))$ nor $\ldeg_F(v) + \ldeg_F(\ol{v}) = \vdeg_F(v) \le f(\delta(F))$ for some chosen variable $v$ and for any function $f$ does hold for $\Mlean$. 

An example for $F \in \Mleani{\delta=1}$ with $\minldeg(F) \ge 2$ (and thus $\minvdeg(F) \ge 4$) is given in Section 5 in \cite{Ku2003b}, displaying a ``star-free'' (thus satisfiable) clause-set $F$ with deficiency $1$. In Subsection 9.3 in \cite{Kullmann2007ClausalFormZI} it is shown that this clause-set is matching lean. ``Star-freeness'' in our context means, that there are no singular variables (occurring in one sign only once). Our simpler construction pushes the number of positive occurrences arbitrary high, but there are variables with only one negative occurrence (i.e., there are singular variables).

For a finite set $V$ of variables let $M(V) \sse A(V)$ be the full clause-set over $V$ containing all full clauses with at most one complementation. Obviously $\delta(F) = 1$ holds, and it is easy to see that $M(V) \in \Mlean$ (for every $\es \not= F' \subset F \sse A(V)$ we have $\delta(F') < \delta(F)$, and thus a full clause-set $F$ is matching lean iff $\delta(F) \ge 1$). Furthermore by definition we have $\ldeg_{M(V)}(v) = \abs{V}$ and $\ldeg_{M(V)}(\ol{v}) = 1$ for $v \in V$.

\begin{lem}\label{lem:exmpmlean}
  For $k \in \NN$ and $K \in \NN$ there are clause-sets $F \in \Mleani{\delta=k}$ such that for all variables $v \in \var(F)$ we have $\ldeg_F(v) \ge K$.
\end{lem}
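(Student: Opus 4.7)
The plan is to take $F$ to be the disjoint union $M(V_1) \cup \cdots \cup M(V_k)$ of $k$ copies of the clause-set $M(V)$ constructed just before the lemma, where $V_1, \ldots, V_k$ are pairwise disjoint variable sets each of size at least $K$. The rationale is that $M(V_i)$ already realises $\ldeg_{M(V_i)}(v) = |V_i| \ge K$ for every $v \in V_i$, together with $\delta(M(V_i)) = 1$ and matching leanness; taking disjoint unions over disjoint variable sets should add deficiencies while preserving both the per-variable positive-degree counts and matching leanness.

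The numerical part is immediate from disjointness: $n(F) = \sum_i |V_i|$ and $c(F) = \sum_i (|V_i|+1) = n(F) + k$, hence $\delta(F) = k$, while every clause of $F$ that contains a given $v \in V_i$ must already lie in $M(V_i)$ (no clause mixes components), whence $\ldeg_F(v) = \ldeg_{M(V_i)}(v) = |V_i| \ge K$. I expect to dispose of these points in one or two lines.

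The substantive step is proving $F \in \Mlean$. My approach is to suppose $\vp$ is a non-trivial matching autarky of $F$ and to argue that it splits along the components: since no clause of $F$ mixes variables from two different $V_i$, the restriction $\vp_i := \vp|_{V_i}$ touches only clauses of $M(V_i)$, and the matching witnessing the matching-autarky condition for $\vp$ restricts to a matching witness for $\vp_i$ on $M(V_i)$. Because $\vp$ is non-trivial, some $\vp_i$ is non-trivial, contradicting $M(V_i) \in \Mlean$. The only conceptual point is this localisation of the matching condition to each component, which becomes automatic once disjointness of the variable sets is invoked; I do not anticipate a genuine obstacle, since everything reduces to bookkeeping on top of the properties of $M(V_i)$ already established in the paragraph preceding the lemma.
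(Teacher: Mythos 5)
Your proof is correct, but the construction is genuinely different from the paper's. The paper does not form a disjoint union: it takes an arbitrary matching-lean clause-set $G$ of deficiency $k-1$ with $n := n(G) \ge K$ on variable set $V$, introduces a disjoint copy $V'$, and adjoins the $n+1$ ``crossed'' clauses $C_i \cup C_i'$, where $M(V) = \set{C_1, \dots, C_{n+1}}$ and $M(V') = \set{C_1', \dots, C_{n+1}'}$. This raises the deficiency from $k-1$ to $k$, boosts the positive literal-degree of every variable of $V \cup V'$ by at least $n \ge K$, and produces a \emph{connected} example (each new clause is full over $V \cup V'$). Your construction instead takes the variable-disjoint union $M(V_1) \cup \dots \cup M(V_k)$ with $\abs{V_i} \ge K$, relying on additivity of deficiency under variable-disjoint unions and on the localisation of a matching autarky to the components; the latter you argue correctly, since no clause of $F$ meets two components, so both the autarky and the witnessing matching split along the $V_i$, and matching leanness of each $M(V_i)$ forces triviality. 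Your version is shorter and more self-contained and establishes the lemma exactly as stated; its only drawback relative to the paper's construction is that for $k \ge 2$ the witnesses are disconnected, hence somewhat degenerate as examples, even though the lemma imposes no connectivity requirement.
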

\begin{proof} For $k = 1$ we can set $F := M(\tb {v_1}{v_K})$; so assume $k \ge 2$. Consider any clause-set $G \in \Mleani{\delta=k-1}$ with $n := n(G) \ge K$ (for example we could use $F \in \Musati{\delta=k-1}$), and let $V := \var(G)$. Consider a disjoint copy of $V$, that is a set $V'$ of variables with $V' \cap V = \es$ and $\abs{V'} = \abs{V}$, and consider two enumerations of the clauses $M(V) = \set{C_1, \dots, C_{n+1}}$, $M(V') = \set{C_1', \dots, C_{n+1}'}$. Now
\begin{displaymath}
  F := G \cup \setb{ C_i \cup C_i' : i \in \tb{1}{n+1}}
\end{displaymath}
has no matching autarky: If $\vp$ is a matching autarky for $F$, then $\var(\vp) \cap V = \es$ since $G$ is matching lean, whence $\var(\vp) \cap V' = \es$ since $M(V')$ is matching lean, and thus $\vp$ must be trivial. Furthermore we have $n(F) = 2 n$ and $c(F) = c(G) + n + 1$, and thus $\delta(F) = c(G) + n + 1 - 2 n = \delta(G)+1 = k$. By definition for all variables $v \in \var(F)$ we have $\ldeg_F(v) \ge n$.
\end{proof}

Remarks:
\begin{enumerate}
\item It remains open whether for deficiency $k \in \NN$ we find examples $F \in \Mleani{\delta=k}$ with $\minldeg(F) \ge k+1$ (the above mentioned star-free clause-sets shows that this is the case for $k=1$), or stronger, $\minldeg(F) \ge K$ for arbitrary $K \in \NN$.
\item The clause-sets $F$ constructed in  Lemma \ref{lem:exmpmlean} are not elements of $\Mlcr_{\delta=k}$ for $k \ge 2$, since $\delta(F[V']) = n+1 - n = 1$, thus $\surp(F) = 1$, and so Condition \ref{def:Mlcr2} of Definition \ref{def:Mlcr} is not fulfilled. The corresponding autarky is a satisfying assignment of $M(V')$, which is easy to find.
\end{enumerate}

\section{Conclusion and open problems}
\label{sec:open}

We have shown the upper bound $\minvdeg(F) \le \nonmer(\surp(F))$ for lean clause-sets (Theorem \ref{thm:leanminvardeg}). The function $\nonmer(k)$ has been characterised in Lemma \ref{lem:solveN} and Corollary \ref{cor:upperboundnonmer}. We presented first initial results regarding the sharpness of the bound and regarding the constructive aspects of the bound (i.e., what happens if the bound is violated). There remain several open problems:
\begin{enumerate}
\item Prove Conjecture \ref{con:findauthard}, which says that such an autarky, which must exist if a clause-set does not fulfil the upper bound on the minimum variable degree of Theorem \ref{thm:leanminvardeg}, can be found in polynomial time. See Subsection \ref{sec:findaut} for more information on this topic.
\item Generalise Theorem \ref{thm:MUminvdegdef} to clause-sets with non-bool\-ean variables; see the discussion after Theorem \ref{thm:MUminvdegdef}.
\item See the remarks to Corollary \ref{cor:charsurp1} (an underlying question is to understand better the quantity ``surplus'').
\item Strengthen the bound on the minimum variable degree for minimally unsatisfiable clause-sets (see the forthcoming \cite{KullmannZhao2010Extremal}).
\item Strengthen the construction of Lemma \ref{lem:exmpmlean} (perhaps completely different constructions are needed).
\end{enumerate}

As mentioned in the introduction, a major motivation for us is the project of the classification of minimally unsatisfiable clause-sets for deficiencies $\delta = 1, 2, \dots$. Especially the classification of unsatisfiable hitting clause-sets in dependency on the deficiency seems very interesting (recall that a hitting clause-set $F$ is defined by the condition that every two clauses $C, C' \in F$, $C \not= C'$, clash in at least one variable, that is $\abs{C \cap \ol{C'}} \ge 1$). The main conjecture is:
\begin{conj}\label{con:finhit}
  For every deficiency $k \in \NN$ there are only finitely many isomorphism types of non-singular unsatisfiable hitting clause-sets.
\end{conj}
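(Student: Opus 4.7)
The plan is to derive the conjecture from a quantitative bound $n(F) \le N(k)$ valid for every non-singular unsatisfiable hitting clause-set $F$ with $\delta(F) = k$; since $c(F) = n(F) + k$ and each clause is a sign pattern on $\var(F)$, finitely many isomorphism types would follow immediately.

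First I would establish the structural fact that every unsatisfiable hitting clause-set $F$ belongs to $\Smusat$. Distinct clauses of a hitting clause-set cannot share a falsifying assignment, so the $2^{n(F) - |C|}$-sized falsifying sets partition $\set{0,1}^{\var(F)}$ by unsatisfiability, yielding the weight identity $\sum_{C \in F} 2^{-|C|} = 1$; removing a clause leaves $2^{n(F) - |C|} > 0$ uncovered assignments (so $F \in \Musat$), and adding a literal to a clause strictly drops the total weight below $1$ (so the extension is satisfiable). Hence Theorem \ref{thm:MUminvdegdef} supplies $v \in \var(F)$ with $\ldeg_F(v), \ldeg_F(\ol v) \in \set{2, \dots, k}$ (non-singularity giving the lower bound), and Lemma \ref{lem:auxSMUSAT} then places both $\pao{v}{0} * F$ and $\pao{v}{1} * F$ in $\Musat$. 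Splitting $F = F_v \cup F_{\ol v} \cup F_0$ into clauses containing $v$, $\ol v$, or neither, the reduct $\pao{v}{1} * F = F_0 \cup \set{C \sm \set{\ol v} : C \in F_{\ol v}}$ is again hitting, because every clash in $F$ not involving $v$ survives.

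I would then prove the bound by induction on $k$, with $k = 1$ vacuous after singular reduction (the only element of $\Musati{\delta=1}$ with no singular variables is $\set{\bot}$). For the step, set $F' := \pao{v}{1} * F$: a hitting MU clause-set with $c(F') = c(F) - \ldeg_F(v)$. Iterated singular reduction applied to $F'$ preserves deficiency, hitting and minimal unsatisfiability, and yields a non-singular hitting MU clause-set $F''$. In the ``good'' case where no variable other than $v$ is eliminated under $\pao{v}{1}$, one has $\delta(F') = k - \ldeg_F(v) + 1 \le k - 1$, and the inductive hypothesis gives $n(F'') \le N(k - 1)$; the remaining task is to bound $n(F) - n(F'')$ by a function of $k$, which amounts to bounding the number of variables that can be singularly reduced away inside $F'$.

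The main obstacle is to bound $\max_{C \in F_v} |C|$ in terms of $k$ alone. A variable $w \ne v$ becomes singular (or is eliminated) under $\pao{v}{1}$ only when a sufficient part of its occurrences in $F$ lies inside $F_v$, so $n(F) - n(F'')$ is controlled by $\sum_{C \in F_v} |C|$, and hence, given $|F_v| = \ldeg_F(v) \le k$, by the maximum clause length in $F_v$. The available constraints are the restricted weight identity $\sum_{C \in F_v} 2^{-|C|} < 1/2$ and the combinatorial requirement that each $C \in F_v$ must clash on some variable $\ne v$ with every one of the $\ge 2$ clauses of $F_{\ol v}$. Turning these into a uniform length bound $\max_{C \in F_v} |C| \le L(k)$ would close the induction with $N(k) \le N(k - 1) + k\, L(k)$; establishing such a length bound, or finding a substitute argument that sidesteps it, is the substantive difficulty I foresee, and is presumably the reason the statement appears only as a conjecture in the present paper.
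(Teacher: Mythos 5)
This statement is a \emph{conjecture}, not a theorem; the paper offers no proof of it, noting only that it follows from known results for $k \le 2$ and that the case $k = 3$ was recently established (in work not contained here). So there is no ``paper's own proof'' to compare against, and your sketch is correctly framed by you as an attack plan rather than a proof. Several of its ingredients are sound: the weight identity $\sum_{C\in F} 2^{-\abs{C}}=1$ does show that unsatisfiable hitting clause-sets are saturated minimally unsatisfiable; $\pao{v}{1}*F$ remains hitting; and singular DP-reduction on a hitting MU clause-set preserves hitting, minimal unsatisfiability, and deficiency (the last because in any MU clause-set the unique positive clause of a singular variable clashes with each negative clause only on that variable, so no tautological resolvents arise). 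You also need not fear the ``bad case'': if $v$ realizes the minimum variable degree in a non-singular $F$, no other variable can vanish under $\pao{v}{1}$, since a variable $w\ne v$ occurring only in $F_v$ would have $\vdeg_F(w)\le\abs{F_v}=\ldeg_F(v)<\vdeg_F(v)$, contradicting minimality.

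The substantive gap is exactly where you place it, but it is wider than your sketch suggests. Your claim that $n(F')-n(F'')$ is ``controlled by $\sum_{C\in F_v}\abs{C}$'' holds only for the first generation of newly-singular variables, which must have had occurrences in $F_v$; once one of them is DP-eliminated, the resolvents can change the degrees of further variables that never met $F_v$, and the cascade is not bounded by any quantity read off from $F_v$. Already at deficiency $1$, a hitting MU clause-set such as $\set{\set{v_1},\set{\ol{v_1},v_2},\dots,\set{\ol{v_1},\dots,\ol{v_{n-1}},v_n},\set{\ol{v_1},\dots,\ol{v_n}}}$ has all $n$ variables singular, so the singular-reduction chain from $F'$ to $F''$ can be arbitrarily long even at fixed deficiency. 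Bounding that chain is essentially equivalent to bounding $n(F)$ in terms of $\delta(F)$ for non-singular unsatisfiable hitting $F$ --- which is precisely the conjecture. Thus the proposal is an honest partial reduction (it reduces the conjecture to a length/cascade bound), not a proof, and you were right to flag the remaining step as the crux; as far as this paper is concerned, that step is open.
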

For $k \le 2$ this conjecture follows from known results, while recently we were able to prove it for $k=3$.

\bibliographystyle{plain}

\end{document}